\newcommand{\C}{\mathcal{C}}
\newcommand{\SSS}{\mathcal{S}}
\newcommand{\comment}[1]{}
\title{Algorithms for the Majority Rule (+) Consensus Tree and
the Frequency Difference Consensus Tree}
\author{
  Jesper Jansson\inst{1}
\and
  Chuanqi Shen\inst{2}
\and
  Wing-Kin Sung\inst{3,4}
}
\institute{
  Laboratory of Mathematical Bioinformatics (Akutsu Laboratory), \\
  Institute for Chemical Research, \\
  Kyoto University, Gokasho, Uji, Kyoto~611-0011, Japan. \\
  E-mail:~\email{jj@kuicr.kyoto-u.ac.jp} \\
  Funded by The Hakubi Project 
  and KAKENHI grant number~23700011.
\and
  Stanford University, 450~Serra Mall, Stanford, CA~94305-2004, U.S.A. \\
  E-mail:~\email{shencq@stanford.edu}
\and
  School of Computing,
  National University of Singapore,
  13~Computing Drive, Singapore~117417. \\
  E-mail: \texttt{ksung@comp.nus.edu.sg}
\and
  Genome Institute of Singapore,
  60~Biopolis Street, Genome, Singapore~138672.
}
\date{}
\begin{document}

\maketitle

\begin{abstract}
This paper presents two new deterministic algorithms for constructing
consensus trees.
Given an input
of $k$~phylogenetic trees with identical leaf label sets
and $n$~leaves each, the first algorithm constructs
the \emph{majority rule~(+) consensus tree} in $O(k n)$ time, which is
optimal since the input size is $\Omega(k n)$,
and the second one constructs the \emph{frequency difference consensus tree}
in $\min \{O(k n^{2}),\, O(k n (k + \log^{2}n))\}$ time.
\end{abstract}


\section{Introduction}
\label{section: Introduction}

A \emph{consensus tree} is a phylogenetic tree that summarizes a given
collection of phylogenetic trees having the same leaf labels but different
branching structures.
Consensus trees are used to resolve structural differences between two or
more existing phylogenetic trees arising from conflicts in the raw data,
to find strongly supported groupings, and to summarize large sets of
candidate trees obtained by bootstrapping when trying to infer a new
phylogenetic tree
accurately~\cite{ACS03,DDBR09,book:Fel04,book:Sung_10}.

Since the first type of consensus tree was proposed by
Adams~III~\cite{A72} in~1972, many others have been defined and analyzed.
See, e.g.,~\cite{chapter:Bryant03}, Chapter~30 in~\cite{book:Fel04},
or Chapter~8.4 in~\cite{book:Sung_10} for some surveys.
Which particular type of consensus tree to use in practice depends on the
context.
For example, the \emph{strict consensus tree}~\cite{SR81} is very intuitive
and easy to compute~\cite{D85} and may be sufficient when there is not so
much disagreement in the data,
the \emph{majority rule consensus tree}~\cite{MM81} is ``the optimal tree to
report if we view the cost of reporting an estimate of the phylogeny to be
a linear function of the number of incorrect clades in the estimate and
the number of true clades that are missing from the estimate and we view 
the reporting of an incorrect grouping as a more serious error than missing
a clade''~\cite{HSL08},
and the \emph{R* consensus tree}~\cite{chapter:Bryant03} provides
a statistically consistent estimator of the species tree topology when
combining gene trees~\cite{DDBR09}.
Therefore, scientists need
efficient algorithms for constructing a broad range of different consensus
trees.

In a recent series of papers~\cite{CJS_12,JSS_13a,JSS_13b,JS_13}, we have
developed fast algorithms for computing
the \emph{majority rule consensus tree}~\cite{MM81},
the \emph{loose consensus tree}~\cite{B90}
(also known in the literature as
the \emph{combinable component consensus tree} or
the \emph{semi-strict consensus tree}),
a \emph{greedy consensus tree}~\cite{chapter:Bryant03,software:Fel05},
the \emph{R* consensus tree}~\cite{chapter:Bryant03},
and consensus trees for so-called
\emph{multi-labeled phylogenetic trees (MUL-trees)}~\cite{LSHPOM09}.
In this paper, we study two relatively new types of consensus trees called
the \emph{majority rule~(+) consensus tree}~\cite{CW07,DF-BMP_10}
and the \emph{frequency difference consensus tree}~\cite{GFKORS03},
and give algorithms for constructing them efficiently.

\subsection{Definitions and notation}

\begin{figure}[t!]
\hspace*{1mm}
\begin{tabular}{ccccccccc}
  \includegraphics[scale=0.35]{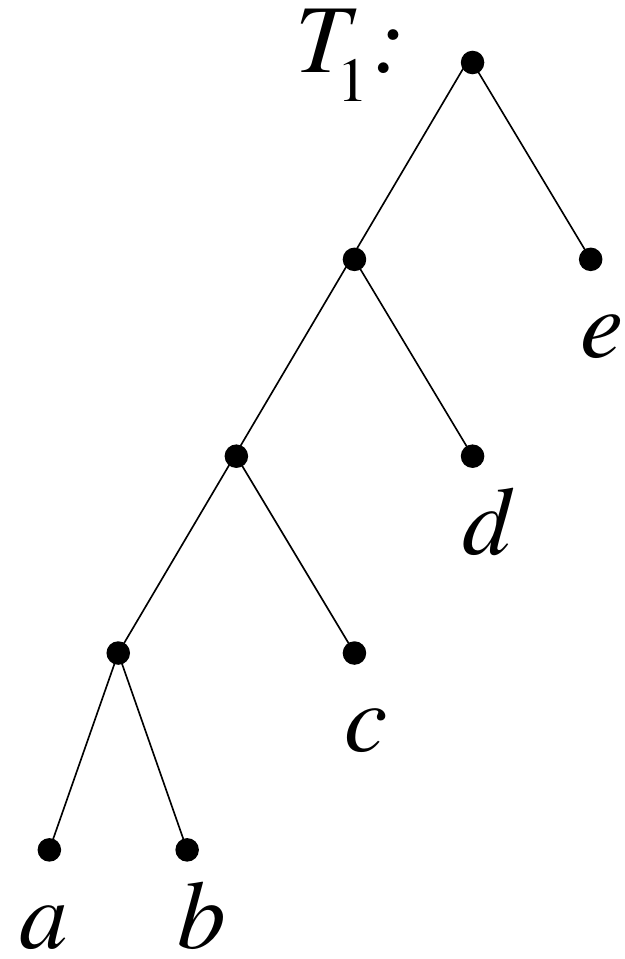}
& \hspace*{5mm} &
  \includegraphics[scale=0.35]{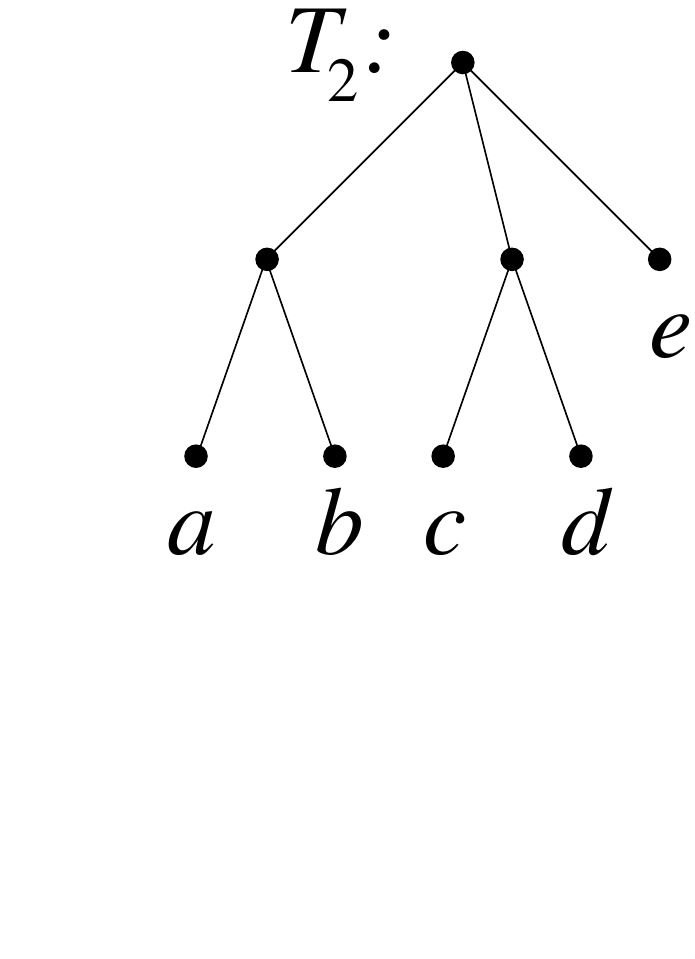}
& \hspace*{5mm} &
  \includegraphics[scale=0.35]{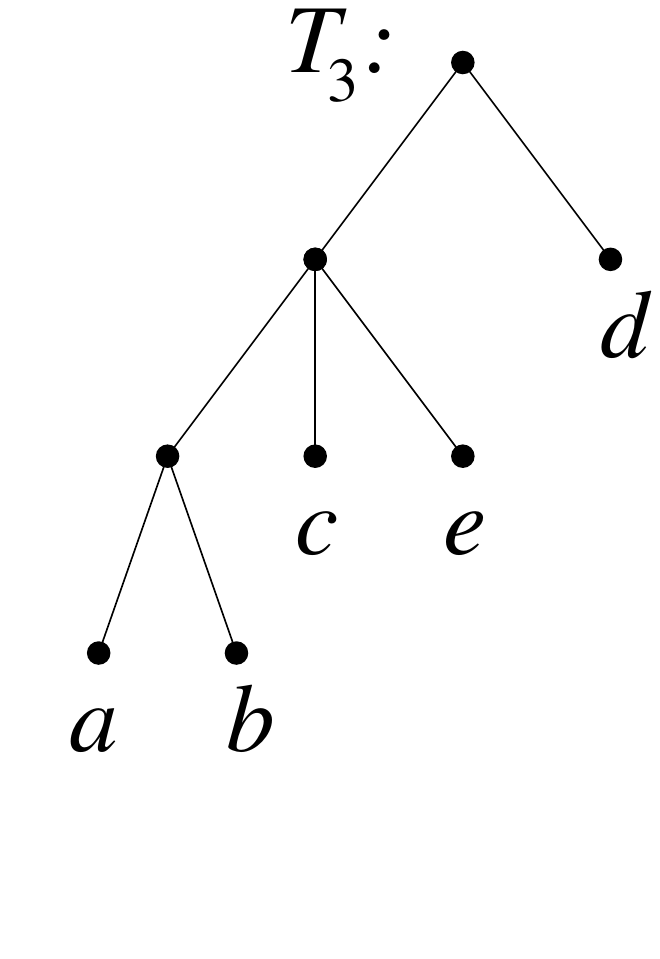}
& \hspace*{5mm} &
  \includegraphics[scale=0.35]{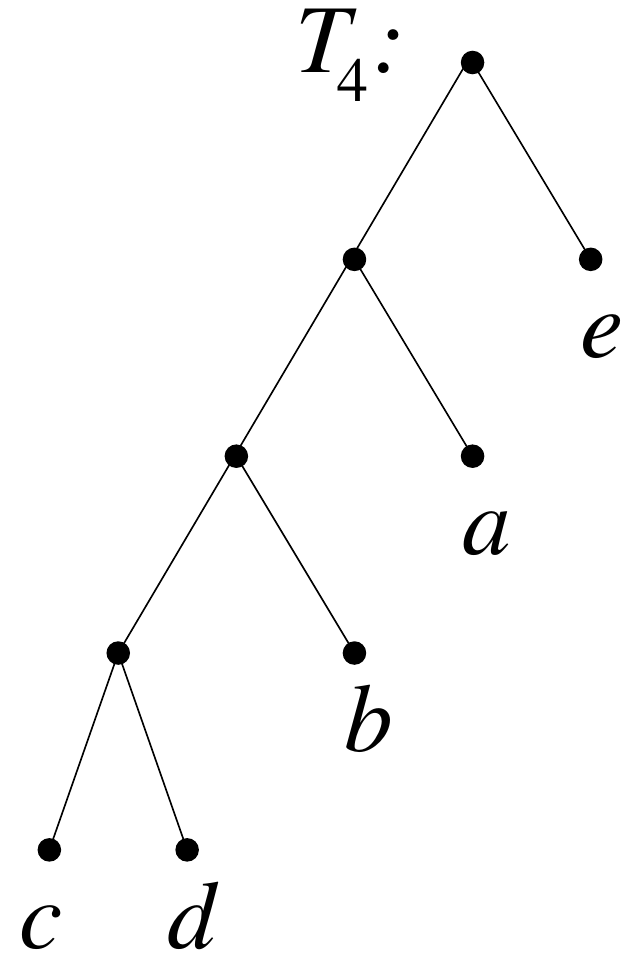}
\\[3mm]
  \includegraphics[scale=0.35]{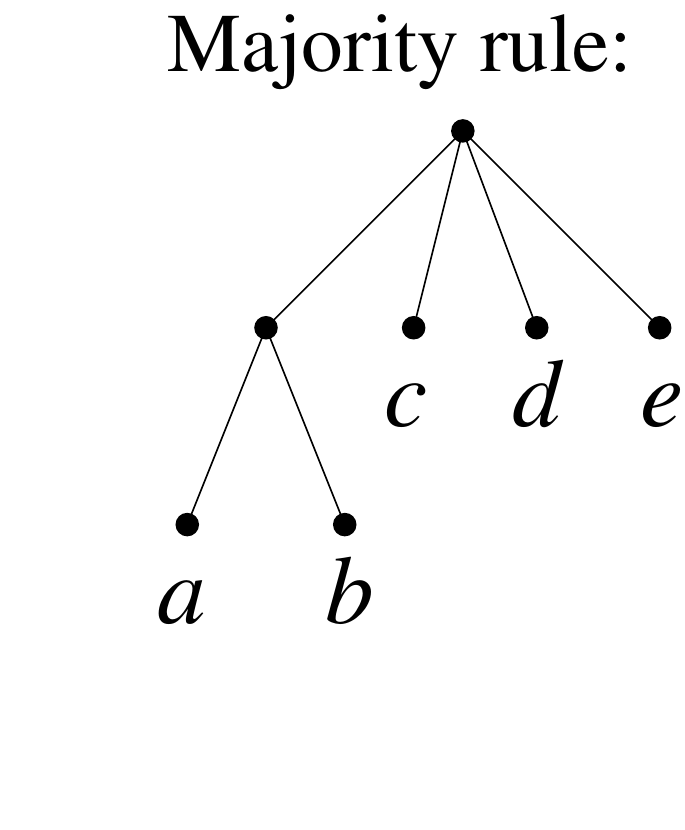}
& \hspace*{5mm} &
  \includegraphics[scale=0.35]{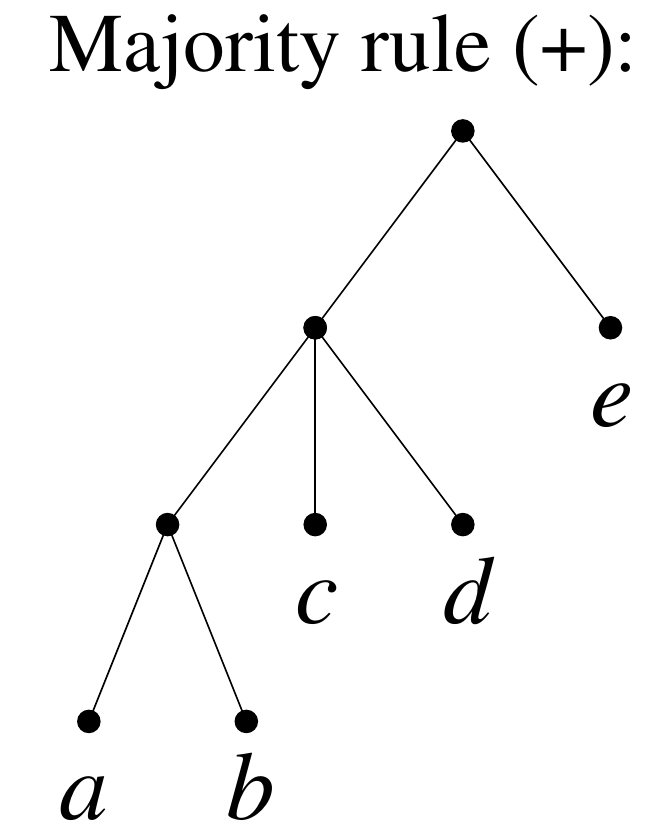}
& \hspace*{5mm} &
  \includegraphics[scale=0.35]{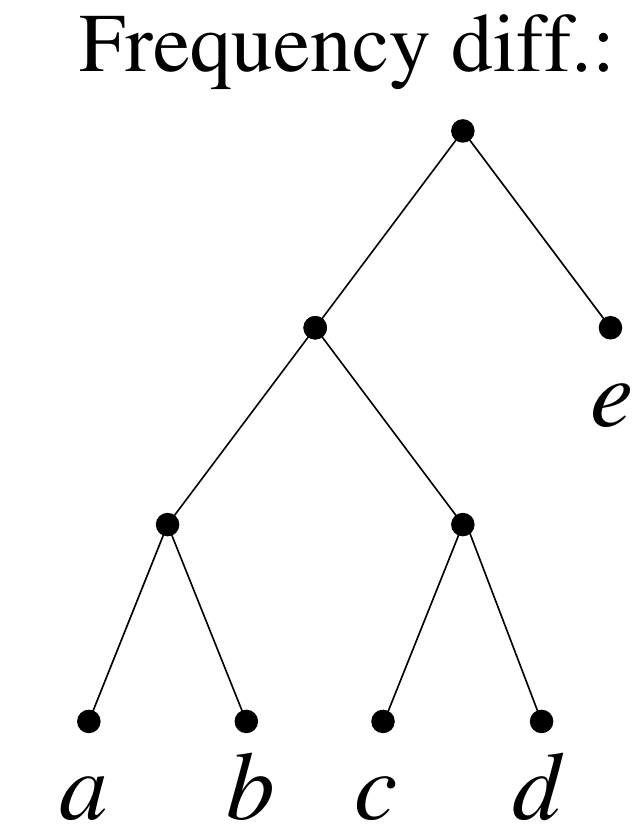}
\end{tabular}
\caption{Let $\SSS = \{T_1, T_2, T_3, T_4\}$ as shown above with
$L = \Lambda(T_1) = \Lambda(T_2) = \Lambda(T_3) = \Lambda(T_4) =
  \{a,b,c,d,e\}$.
The only non-trivial majority cluster of~$\SSS$ is
$\{a,b\}$,
the non-trivial majority~(+) clusters of~$\SSS$ are
$\{a,b\}$ and $\{a,b,c,d\}$, and
the non-trivial frequency difference clusters of~$\SSS$ are
$\{a,b\}$, $\{a,b,c,d\}$, and $\{c,d\}$.
The majority rule, majority rule~(+), and frequency difference consensus
trees of~$\SSS$ are displayed.
}
\label{figure: frequency_difference_example}
\end{figure}

We shall use the following basic definitions.
A \emph{phylogenetic tree} is a rooted, unordered, leaf-labeled tree in which
every internal node has at least two children and all leaves have different
labels.
(Below, phylogenetic trees are referred to as ``trees'' for short).
For any tree~$T$, the set of all nodes in~$T$ is denoted by~$V(T)$ and
the set of all leaf labels in~$T$ by~$\Lambda(T)$.
Any nonempty subset~$C$ of~$\Lambda(T)$ is called a \emph{cluster}
of~$\Lambda(T)$;
if $|C| = 1$ or $C = \Lambda(T)$ then $C$ is \emph{trivial}, and otherwise,
$C$~is \emph{non-trivial}.
For any $u \in V(T)$, $T[u]$~denotes the subtree of~$T$ rooted at
the node~$u$.
Observe that if $u$ is the root of~$T$ or if $u$ is a leaf then
$\Lambda(T[u])$ is a trivial cluster.
The set $\C(T) \,=\, \bigcup\nolimits_{u \in V(T)} \{\Lambda(T[u])\}$
is called the \emph{cluster collection of~$T$}, and
any cluster $C \subseteq \Lambda(T)$ is said to \emph{occur in~$T$}
if $C \in \C(T)$.

Two clusters~$C_1, C_2 \subseteq \Lambda(T)$ are \emph{compatible}
if $C_1 \subseteq C_2$, \,$C_2 \subseteq C_1$, or $C_1 \cap C_2 = \emptyset$.
If $C_1$ and $C_2$ are compatible, we write $C_1 \smile C_2$;
otherwise, $C_1 \not\smile C_2$.
A cluster $C \subseteq \Lambda(T)$ is \emph{compatible with~$T$} if
$C \smile \Lambda(T[u])$ holds for every node $u \in V(T)$.
In this case, we write $C \smile T$, and $C \not\smile T$ otherwise.
If $T_1$ and~$T_2$ are two trees with $\Lambda(T_1) = \Lambda(T_2)$ such that
every cluster in~$\C(T_1)$ is compatible with~$T_2$ then it follows that
every cluster in~$\C(T_2)$ is compatible with~$T_1$, and we say that
$T_1$ and~$T_2$ are \emph{compatible}.
Any two clusters or trees that are not compatible are called
\emph{incompatible}.

Next, let $\SSS \,=\, \{T_1, T_2, \dots, T_k\}$ be a set of trees satisfying
$\Lambda(T_1) = \Lambda(T_2) = \dots = \Lambda(T_k) = L$ for some leaf label
set~$L$.
For any cluster~$C$ of~$L$, denote the set of all trees in~$\SSS$ in which
$C$ occurs by~$K_{C}(\SSS)$ and the set of all trees in~$\SSS$ that are
incompatible with~$C$ by~$Q_{C}(\SSS)$.
Thus,
$K_{C}(\SSS) = \{T_i \,:\, C \in \C(T_i)\}$
and $Q_{C}(\SSS) = \{T_i \,:\, C \not\smile T_i\}$.
Define three special types of clusters:
\begin{itemize}
\item[{\raise0.9pt\hbox{$\bullet$}}]
If $|K_{C}(\SSS)| > \frac{k}{2}$
then $C$ is a \emph{majority cluster of~$\SSS$}.
\medskip
\item[{\raise0.9pt\hbox{$\bullet$}}]
If $|K_{C}(\SSS)| > |Q_{C}(\SSS)|$
then $C$ is a \emph{majority~(+) cluster of~$\SSS$}.
\medskip
\item[{\raise0.9pt\hbox{$\bullet$}}]
If $|K_{C}(\SSS)| >
  \max \{|K_{D}(\SSS)| \,:\, D \subseteq L \textnormal{ and } C \not\smile D\}$
then $C$ is a \emph{frequency difference cluster of~$\SSS$}.
\end{itemize}
(Informally, a frequency difference cluster is a cluster that occurs more
frequently than each of the clusters that is incompatible with it.)
Note that a majority cluster of~$\SSS$ is always a majority~(+) cluster
of~$\SSS$ and that a majority~(+) cluster of~$\SSS$ is always a frequency
difference cluster of~$\SSS$, but not the other way around.

The \emph{majority rule consensus tree of~$\SSS$}~\cite{MM81}
is the tree~$T$ such that $\Lambda(T) = L$ and
$\C(T)$ consists of all majority clusters of~$\SSS$.
Similarly,
the \emph{majority rule~(+) consensus tree of~$\SSS$}~\cite{CW07,DF-BMP_10}
is the tree~$T$ such that $\Lambda(T) = L$ and
$\C(T)$ consists of all majority~(+) clusters of~$\SSS$,
and
the \emph{frequency difference consensus tree of~$\SSS$}~\cite{GFKORS03}
is the tree~$T$ such that $\Lambda(T) = L$ and
$\C(T)$ consists of all frequency difference clusters of~$\SSS$.
See Fig.~\ref{figure: frequency_difference_example} for some examples.

From here on, $\SSS$~is assumed to be an input set of identically leaf-labeled
trees, and the leaf label set of~$\SSS$ is denoted by~$L$.
To express the size of the input, we define $k = |\SSS|$ and $n = |L|$.

\subsection{Previous work}
\label{subsection: Previous_work}

Margush and McMorris~\cite{MM81} introduced the majority rule consensus tree
in~1981, and a deterministic algorithm for constructing it in optimal
$O(k n)$ worst-case running time was presented recently in~\cite{JSS_13b}.
(A randomized algorithm with $O(k n)$ expected running time and unbounded
worst-case running time was given earlier by
Amenta \emph{et al.}~\cite{ACS03}.)
The majority rule consensus tree has several desirable mathematical
properties~\cite{BM86,HSL08,MP08},
and algorithms for constructing it have been implemented in popular
computational phylogenetics packages like
PHYLIP~\cite{software:Fel05},
TNT~\cite{software:GFN08},
COMPONENT~\cite{software:Pag93},
MrBayes~\cite{software:RH03},
SumTrees in DendroPy~\cite{software:SH_10},
and PAUP*~\cite{software:Swo03}.
Consequently, it is one of the most widely used consensus trees in
practice~\cite[p.~450]{CW07}.
One drawback of the majority rule consensus tree is that it may be too harsh
and discard valuable branching information.
For example, in Fig.~\ref{figure: frequency_difference_example}, even
though the cluster $\{a,b,c,d\}$ is compatible with $75\%$ of the input
trees, it is not included in the majority rule consensus tree.
For this reason, people have become interested in alternative types of
consensus trees that include all the majority clusters and at the same time,
also include other meaningful, well-defined kinds of clusters.
The majority rule~(+) consensus tree and the frequency difference consensus
tree are two such consensus trees.

The majority rule~(+) consensus tree was defined by
Dong~\emph{et al.}~\cite{DF-BMP_10} in~2010.
It was obtained as a special case of an attempted generalization by
Cotton and Wilkinson~\cite{CW07} of the majority rule consensus tree.
According to~\cite{DF-BMP_10}, Cotton and Wilkinson~\cite{CW07} suggested
two types of supertrees\footnote{A
\emph{supertree} is a generalization of a consensus tree that does not
require the input trees to have identical leaf label sets.}
called majority-rule~(-) and majority-rule~(+) that were supposed to
generalize the majority rule consensus tree.
Unexpectedly, only the first one did, and by restricting the second one to
the consensus tree case, \cite{DF-BMP_10}~arrived at the majority rule~(+)
consensus tree.
Dong \emph{et al.}~\cite{DF-BMP_10} established some fundamental properties
of this type of consensus tree and pointed out the existence of
a polynomial-time algorithm for constructing it, but left the task of finding
the best possible such algorithm as an open problem.
As far as we know, no implementation for computing the majority rule~(+)
consensus tree is publicly available.

Goloboff \emph{et al.}~\cite{GFKORS03} initially proposed the frequency
difference consensus tree as a way to improve methods for evaluating
group support in parsimony analysis.
Its relationships to other consensus trees have been studied
in~\cite{DF-BMP_10}.
A method for constructing it has been implemented in the free software
package TNT~\cite{software:GFN08} but the algorithm used is not documented
and its time complexity is unknown.
We note that since the number of clusters occurring in~$\SSS$ may be
$\Omega(k n)$, a naive algorithm that compares every cluster in~$\SSS$
to every other cluster in~$\SSS$ directly
would require~$\Omega(k^{2} n^{2})$~time.

\subsection{Organization of the paper and new results}

Due to space constraints, some proofs have been omitted from the conference
version of this paper.
Please see the journal version for the complete proofs.

The paper is organized as follows.
Section~\ref{section: Preliminaries} summarizes some results from
the literature that are needed later.
In Section~\ref{section: Majority_rule_plus_consensus_tree}, we modify
the techniques from~\cite{JSS_13b} to obtain an $O(k n)$-time algorithm for
the majority rule~(+) consensus tree.
Its running time is optimal because the size of the input is $\Omega(k n)$;
hence, we resolve the open problem of Dong \emph{et al.}~\cite{DF-BMP_10}
mentioned above.
Next, Section~\ref{section: Frequency_difference_consensus_tree} gives
a $\min \{O(k n^{2}),\, O(k n (k + \log^{2}n))\}$-time algorithm for
constructing the frequency difference consensus tree
(here, the second term is smaller than the first term if $k = o(n)$; e.g.,
if $k = O(1)$ then the running time reduces to $O(n \log^{2}n)$).
Our algorithms are fully deterministic and do not need to use hashing.
Finally, Section~\ref{section: Implementations} discusses implementations.


\section{Preliminaries}
\label{section: Preliminaries}

\subsection{The $\mathit{delete}$ and $\mathit{insert}$ operations}

The $\mathit{delete}$ and $\mathit{insert}$ operations are two operations
that modify the structure of a tree.
They are defined in the following way.

Let $T$ be a tree and let $u$ be any non-root, internal node in~$T$.
The $\mathit{delete}$ operation on~$u$ makes all of $u$'s children become
children of the parent of~$u$, and then removes $u$ and the edge between~$u$
and its parent.
(See, e.g., Figure~2 in~\cite{JSS_13a} for an illustration.)
The time needed for this operation is proportional to the number of children
of~$u$, and the effect of applying it is that the cluster collection of~$T$
is changed to $\C(T) \setminus \{\Lambda(T[u])\}$.
Conversely, the $\mathit{insert}$ operation creates a new node~$u$ that
becomes:
(1)~a child of an existing internal node~$v$, and
(2)~the parent of a proper subset~$X$ of $v$'s children satisfying
$|X| \geq 2$;
the effect is that $\C(T)$ is changed to $\C(T) \cup \{\Lambda(T[u])\}$,
where $\Lambda(T[u]) = \bigcup_{v_i \in X} \Lambda(T[v_i])$.



\subsection{Subroutines}
\label{subsection: Subroutines}

The new algorithms in this paper use the following algorithms from
the literature as subroutines:
Day's algorithm~\cite{D85},
Procedure \textnormal{\texttt{One-Way\_Compatible}}~\cite{JSS_13a},
and Procedure~\textnormal{\texttt{Merge\_Trees}}~\cite{JSS_13a}.
Day's algorithm~\cite{D85} is used to efficiently check whether any specified
cluster that occurs in a tree~$T$ also occurs in another
tree~$T_{\mathit{ref}}$, and can be applied to find the set of all clusters
that occur in both~$T$ and~$T_{\mathit{ref}}$ in linear time.
Procedure~\texttt{One-Way\_Compatible} takes as input two trees~$T_A$
and~$T_B$ with identical leaf label sets and outputs a copy of~$T_A$
in which every cluster that is not compatible with~$T_B$ has been removed.
(The procedure is asymmetric; e.g., if $T_A$ consists of $n$~leaves attached
to a root node and $T_B \neq T_A$ then
\texttt{One-Way\_Compatible}$(T_A, T_B)$ $= T_A$, while
\texttt{One-Way\_Compatible}$(T_B, T_A)$ $= T_B$.)
Procedure~\textnormal{\texttt{Merge\_Trees}} takes as input two compatible
trees with identical leaf label sets and outputs a tree that combines their
cluster collections.
Their properties are summarized below;
for details, see references~\cite{D85} and~\cite{JSS_13a}.

\begin{lemma}
\label{lemma: Day's_algorithm}
(Day~\cite{D85})
Let $T_{\mathit{ref}}$ and~$T$ be two given trees with
$\Lambda(T_{\mathit{ref}}) = \Lambda(T) = L$
and let $n = |L|$.
After $O(n)$ time preprocessing, it is possible to determine,
for any $u \in V(T)$, if $\Lambda(T[u]) \in \C(T_{\mathit{ref}})$ in
$O(1)$ time.
\end{lemma}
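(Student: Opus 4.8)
The plan is to implement the classical encoding of Day~\cite{D85}, which reduces the cluster-membership test to a comparison of integer intervals. First I would fix a depth-first traversal of $T_{\mathit{ref}}$ and number its leaves $1, 2, \dots, n$ in the order they are visited, assigning to each label $\ell \in L$ a position $\mathrm{pos}(\ell) \in \{1, \dots, n\}$. The key structural fact is that, under this numbering, every cluster $\Lambda(T_{\mathit{ref}}[v])$ maps to a \emph{contiguous} block of positions $[L_v, R_v]$ with $R_v - L_v + 1 = |\Lambda(T_{\mathit{ref}}[v])|$, since the leaves of any subtree are visited consecutively. Thus a cluster of~$L$ can occur in~$T_{\mathit{ref}}$ only if its image under $\mathrm{pos}$ is an interval whose length equals its cardinality, and the membership question becomes: does the interval $[L_v, R_v]$ actually arise from some node of~$T_{\mathit{ref}}$?

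During preprocessing I would then, in a single bottom-up sweep of $T_{\mathit{ref}}$, compute for every node~$v$ the triple $(L_v, R_v, |\Lambda(T_{\mathit{ref}}[v])|)$ and record each internal cluster in a \emph{cluster table} of size~$n$. The crucial point is to store each cluster at a distinct \emph{representative position}: since the clusters of $T_{\mathit{ref}}$ form a laminar family of nested or disjoint intervals, each interval $[L_v, R_v]$ can be keyed by exactly one of its two endpoints so that the assignment is injective, letting the table hold at most one interval per index. A second bottom-up sweep, this time over~$T$, computes for each $u \in V(T)$ the analogous triple $(L_u, R_u, |\Lambda(T[u])|)$ using the positions inherited from~$\mathrm{pos}$; since the minimum, maximum, and size at~$u$ are obtained by combining the values already available at the children of~$u$, both sweeps run in $O(n)$ total time.

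To answer a query for a node $u \in V(T)$ in $O(1)$ time I would first test whether $R_u - L_u + 1 = |\Lambda(T[u])|$; if this fails, $\Lambda(T[u])$ is not contiguous and hence cannot occur in~$T_{\mathit{ref}}$. Otherwise I would inspect the table at the representative index of $[L_u, R_u]$ and report \textbf{yes} exactly when the stored interval equals $[L_u, R_u]$. Correctness follows because the interval encoding is injective on the clusters of~$T_{\mathit{ref}}$, so a contiguous image coincides with a stored interval if and only if the corresponding cluster actually occurs in~$T_{\mathit{ref}}$.

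The main obstacle I expect is the design of the representative-position scheme: I must argue that, for the laminar family of intervals induced by~$T_{\mathit{ref}}$, one can assign to every cluster a unique slot among $\{1, \dots, n\}$ using only a left-or-right endpoint, so that no two clusters collide and none is lost. This is exactly the combinatorial trick at the heart of Day's construction; once it is in place, the interval-length check together with a single table lookup yields the claimed $O(1)$ query time, while all traversals and table construction stay within the $O(n)$ preprocessing budget.
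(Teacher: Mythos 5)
The paper never proves this lemma: it is imported verbatim as a known result, with all details deferred to Day's paper~\cite{D85}, so your proposal can only be judged against Day's construction itself. Your outline does reconstruct that construction correctly: DFS-number the leaves of~$T_{\mathit{ref}}$, observe that every cluster of~$T_{\mathit{ref}}$ becomes a contiguous interval under this numbering, compute the triple $(\min, \max, \mathrm{size})$ for every node of both trees in two linear bottom-up sweeps, test contiguity at query time, and finish with a constant-time table lookup.

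There are, however, two concrete gaps. The first you name yourself but do not close: the existence of an injective endpoint-assignment for the laminar family of intervals is the entire content of Day's trick, and writing that each interval ``can be keyed by exactly one of its two endpoints so that the assignment is injective'' is asserting precisely what has to be proved (a naive rule, e.g.\ ``use the right endpoint unless taken,'' does collide on nested intervals sharing endpoints). It can be closed in a few lines: store the interval of an internal non-root node $v$ at its \emph{left} endpoint if $v$ is the last child of its parent, at its \emph{right} endpoint otherwise, and store the root at position~$n$. If two internal nodes collided at a common right endpoint they would be nested with equal right endpoints, so the lower one is the last child of a parent having that same right endpoint and would therefore have been stored at its left endpoint, a contradiction; symmetrically, two nodes sharing a stored left endpoint force the lower one to be a first (hence not last, as every internal node has at least two children) child, so it would have been stored at its right endpoint; and a collision between one node's left and another's right endpoint forces one of the two intervals to be a singleton, i.e.\ a leaf, which is not stored. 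The second gap is in your query step: you propose to ``inspect the table at the representative index of $[L_u, R_u]$,'' but the querier cannot compute which endpoint $T_{\mathit{ref}}$ chose as representative, since that choice depends on the topology of~$T_{\mathit{ref}}$ rather than on the interval alone. The fix is trivial --- probe the table entries at both positions $L_u$ and $R_u$ and accept if either one stores exactly $[L_u, R_u]$, which is still $O(1)$ --- but as written the lookup is not executable.
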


\begin{lemma}
\label{lemma: Procedure_One-Way_Compatible}
(\cite{JSS_13a})
Let $T_A$ and~$T_B$ be two given trees with
$\Lambda(T_A) = \Lambda(T_B) = L$
and let $n = |L|$.
Procedure \textnormal{\texttt{One-Way\_Compatible}}$(T_A, T_B)$
returns a tree~$T$ with $\Lambda(T) = L$ such that
$\C(T) =
  \{C \in \C(T_A) :
  \textnormal{$C$ is compatible with $T_B$}\}$
in $O(n)$ time.
\end{lemma}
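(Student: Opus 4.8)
The plan is to establish two things about the tree $T$ returned by $\textnormal{\texttt{One-Way\_Compatible}}(T_A, T_B)$: first that $\C(T)$ equals exactly the set of clusters of $T_A$ that are compatible with $T_B$ (correctness), and second that the whole procedure runs in $O(n)$ time. The backbone is a structural characterization of compatibility. For a non-trivial cluster $C \in \C(T_A)$, let $v = \mathrm{lca}_{T_B}(C)$ be the lowest common ancestor in $T_B$ of the leaves of $C$. I would first prove that $C \smile T_B$ if and only if $C$ is the union of the leaf-sets $\Lambda(T_B[w])$ over some subset of the children $w$ of $v$. The forward direction holds because $v$ is the lca, so $C$ meets at least two children of $v$ while $C \subseteq \Lambda(T_B[v])$; compatibility with each $\Lambda(T_B[w])$ then forces every child that $C$ touches to be wholly contained in $C$, and compatibility with all ancestors, descendants, and unrelated nodes of $v$ is automatic. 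The converse is immediate from the same case analysis.

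Given this characterization, the algorithm is natural: let $T$ be a copy of $T_A$, and for every internal non-root node $u$ of $T$ apply the $\mathit{delete}$ operation exactly when $\Lambda(T_A[u]) \not\smile T_B$. Correctness of the cluster collection then follows from the stated effect of $\mathit{delete}$ (it removes precisely $\{\Lambda(T[u])\}$ from $\C(T)$), together with the observation that compatibility of a cluster with $T_B$ depends only on its leaf-set and is therefore unaffected by deletions performed elsewhere in $T$. Hence after all deletions $\C(T) = \{C \in \C(T_A) : C \smile T_B\}$, and $\Lambda(T) = L$ is preserved throughout.

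For the running time, I would reuse the interval encoding behind Day's algorithm (Lemma~\ref{lemma: Day's_algorithm}): number the leaves $1, \dots, n$ by a depth-first traversal of $T_B$, so that every $\Lambda(T_B[w])$ is a contiguous interval $[\mathit{lo}_w, \mathit{hi}_w]$ with $\mathit{hi}_w - \mathit{lo}_w + 1 = |\Lambda(T_B[w])|$, and relabel the leaves of $T_A$ accordingly, all in $O(n)$ time. A single postorder scan of $T_A$ then computes $\min(u)$, $\max(u)$, and $|\Lambda(T_A[u])|$ for every node $u$ in $O(n)$ total. Preprocessing $T_B$ for constant-time lowest-common-ancestor (and child-towards-a-leaf) queries in $O(n)$ time lets me recover $v = \mathrm{lca}_{T_B}(\min(u), \max(u))$, which equals $\mathrm{lca}_{T_B}(\Lambda(T_A[u]))$ since in a depth-first numbering the lca of a leaf-set is the lca of its two extreme leaves, and then to check the boundary conditions $\mathit{lo}_{w_p} = \min(u)$ and $\mathit{hi}_{w_q} = \max(u)$ for the children $w_p, w_q$ of $v$ that contain the extreme leaves.

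The main obstacle is the final step of the compatibility test. Verifying that $C$ is merely an interval of $T_B$ is not sufficient: a cluster such as $\{a,b,d\}$ can be a union of whole children-subtrees of its lca, hence compatible, while occupying a non-contiguous set of leaf numbers, so the naive ``is it a contiguous full interval'' test (which suffices for Day's \emph{equality} test) would wrongly reject it. The delicate point is therefore to certify, within the linear budget, that $C$ is a union of \emph{entire} children-subtrees of $v$ rather than one that splits some child. I would resolve this by propagating the relevant information bottom-up through the structure of $T_A$ --- maintaining for each node enough boundary and fragment data about how $\Lambda(T_A[u])$ sits inside the children of its $T_B$-lca --- so that each node's decision costs $O(1)$ amortized and the deletions, each proportional to the number of children removed, sum to $O(n)$ overall. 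Establishing that this bookkeeping is both sound and linear is the part that will require the most care.
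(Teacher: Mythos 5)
First, note that the paper does not actually prove this lemma: it is imported verbatim from~\cite{JSS_13a} (``for details, see references~\cite{D85} and~\cite{JSS_13a}''), so there is no in-paper proof to compare against, and your attempt has to be judged as a from-scratch reconstruction of the result in~\cite{JSS_13a}. Your structural groundwork is sound: the characterization that a non-trivial $C$ with $v = \mathit{lca}_{T_B}(C)$ satisfies $C \smile T_B$ if and only if $C$ is a union of \emph{entire} children-subtrees of~$v$ is correct (and is essentially Lemma~\ref{lemma: compatible_lca_path} of the paper in disguise), the deletion-based algorithm preserves $\Lambda(T)=L$ and removes exactly the right clusters, and your observation that compatibility with $T_B$ depends only on a node's leaf-set (hence is unaffected by other deletions) is the right justification for correctness.

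The genuine gap is the running time, which is the entire content of the lemma. You correctly identify the crux---certifying, within an $O(n)$ total budget, that each $\Lambda(T_A[u])$ is a union of whole children-subtrees of its $T_B$-lca rather than a set that splits some child---but you do not solve it; you only assert that ``boundary and fragment data'' propagated bottom-up through $T_A$ will make each decision $O(1)$ amortized, without defining this data, showing how it composes at internal nodes of $T_A$, or proving the amortization. This cannot be waved away: the naive realization of your test at a node $u$ costs $\Theta(|\Lambda(T_A[u])|)$ or $\Theta(\deg_{T_B}(v_u))$, and for caterpillar-shaped inputs either quantity sums to $\Theta(n^{2})$ over all $u \in V(T_A)$, so the claimed bound does not follow from anything you have written. (Your lca and interval machinery correctly handles locating $v_u$ and the extreme leaves, but, as you yourself note, that test only detects contiguity, which is neither necessary nor sufficient here.) A complete proof needs a concrete, verifiably linear-time mechanism---this is precisely the nontrivial machinery that~\cite{JSS_13a} supplies and that your proposal leaves as a placeholder. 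A secondary, fixable point: the claim that the $\mathit{delete}$ operations ``sum to $O(n)$ overall'' also needs an argument (e.g., performing them in top-down order so that each node's parent changes at most once, as this paper does in Steps~\ref{substep: MRP_Phase_1_delete_loop} and~\ref{step: MRP_Phase_2_delete_loop} of \texttt{Maj\_Rule\_Plus}), since reattached children can inflate the cost of later deletions if the order is chosen badly.
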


\begin{lemma}
\label{lemma: Procedure_Merge_Trees}
(\cite{JSS_13a})
Let $T_A$ and~$T_B$ be two given trees with
$\Lambda(T_A) = \Lambda(T_B) = L$
that are compatible
and let $n = |L|$.
Procedure~\textnormal{\texttt{Merge\_Trees}}$(T_A, T_B)$
returns a tree~$T$ with $\Lambda(T) = L$ and $\C(T) = \C(T_A) \cup \C(T_B)$
in $O(n)$ time.
\end{lemma}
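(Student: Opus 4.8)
The plan is to construct $T$ by starting from a copy of $T_A$ and then inserting exactly those clusters of $T_B$ that do not already occur in $T_A$, using the $\mathit{insert}$ operation. First I would record why the target tree is well defined: since $T_A$ and $T_B$ are compatible, every cluster of $\C(T_A)$ is compatible with every cluster of $\C(T_B)$, while the clusters within each tree are of course mutually compatible, so $\C(T_A) \cup \C(T_B)$ is a pairwise-compatible (laminar) family over $L$. Because both trees are phylogenetic trees with leaf set $L$, this family contains all singletons and $L$ itself, and any such laminar family is exactly the cluster collection of a unique tree with leaf set $L$. Hence a tree $T$ with $\C(T) = \C(T_A) \cup \C(T_B)$ exists and is unique; the remaining work is to build it within the stated time bound.

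Next I would identify the clusters that must be inserted. Running Day's algorithm (Lemma~\ref{lemma: Day's_algorithm}) with $T_A$ as the reference tree costs $O(n)$ preprocessing and then lets me test, for each node $w \in V(T_B)$, whether $\Lambda(T_B[w]) \in \C(T_A)$ in $O(1)$ time. Discarding the nodes whose clusters already occur in $T_A$ leaves a set $W \subseteq V(T_B)$, and $\{\Lambda(T_B[w]) : w \in W\}$ is precisely $\C(T_B) \setminus \C(T_A)$. Thus $\C(T_A) \cup \C(T_B) = \C(T_A) \cup \{\Lambda(T_B[w]) : w \in W\}$, and inserting these clusters into the copy of $T_A$ will yield the desired collection.

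It then remains to realize each insertion correctly and cheaply. Fix $w \in W$ and let $C = \Lambda(T_B[w])$. The nodes $v$ of $T_A$ with $C \subseteq \Lambda(T_A[v])$ form a path down from the root, and the minimal such node $v$ satisfies $C \subsetneq \Lambda(T_A[v])$ because $C \notin \C(T_A)$. Compatibility forces each child $c$ of $v$ to satisfy $\Lambda(T_A[c]) \subseteq C$ or $\Lambda(T_A[c]) \cap C = \emptyset$, so $C$ is the union of the clusters of a subset $X$ of $v$'s children; minimality of $v$ and $C \notin \C(T_A)$ give $|X| \ge 2$. Grouping exactly the children in $X$ under a new node is therefore a valid $\mathit{insert}$ that adds $C$ and nothing else. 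To decide membership in $X$ I would precompute a representative leaf for each node of $T_A$ (for instance the smallest leaf number under the DFS numbering produced by Day's algorithm): for a child $c$ of $v$, compatibility gives $\Lambda(T_A[c]) \subseteq C$ iff the representative of $c$ lies in $C$, reducing each grouping test to a single membership test.

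The main obstacle is the running time. Handled naively, searching for $v$ and scanning the children of $v$ for every $w \in W$ could cost $\Theta(\deg v)$ per insertion and $\Theta(n^{2})$ in total. I would overcome this with a single synchronized pass that maps each $w \in W$ to its insertion site $v$ in the (evolving copy of) $T_A$ in amortized $O(1)$ time and charges the children pulled into each new node to the edges of $T_B$, so that every leaf and every internal edge is touched only a constant number of times; this is exactly the delicate, linear-time bookkeeping that constitutes Procedure~\textnormal{\texttt{Merge\_Trees}}. Since $T_A$ and $T_B$ each have $O(n)$ nodes, and the Day's preprocessing, the $O(n)$ membership tests, and the $O(n)$ total insertion work are each $O(n)$, the procedure runs in $O(n)$ time. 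By construction the returned tree satisfies $\Lambda(T) = L$ and $\C(T) = \C(T_A) \cup \C(T_B)$, as required.
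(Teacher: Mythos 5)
Your structural analysis is sound, and it is worth noting up front that the paper itself gives no proof of this lemma at all: it is quoted as a black-box subroutine from~\cite{JSS_13a} (``for details, see references \dots''), so there is no in-paper argument to compare against. Judged on its own terms, the first three paragraphs of your proposal are correct and essentially complete: the union $\C(T_A) \cup \C(T_B)$ is a laminar family containing all trivial clusters (this uses exactly the paper's definition of tree compatibility), hence is the cluster collection of a unique tree; Day's algorithm (Lemma~\ref{lemma: Day's_algorithm}) with reference tree $T_A$ identifies $\C(T_B) \setminus \C(T_A)$ in $O(n)$ time; and your lca/child-dichotomy argument correctly shows that each missing cluster $C$ can be added by one valid $\mathit{insert}$ operation grouping a proper subset $X$ of children with $|X| \geq 2$.

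The genuine gap is the $O(n)$ running time, which is the entire non-routine content of the lemma. Your last paragraph does not prove it; it asserts that a ``single synchronized pass'' achieves amortized $O(1)$ site location and a charging scheme bounding the child-regrouping work, and then concedes that this bookkeeping ``is exactly \dots Procedure \texttt{Merge\_Trees}.'' That is circular: the lemma claims precisely that such a linear-time procedure exists, so a proof must exhibit it. Two concrete issues are left open. First, the order of insertions matters: if $T_A$ is a star on leaves $1,\dots,n$ and $T_B$ is a caterpillar, inserting the clusters $\{1,\dots,n-1\}, \{1,\dots,n-2\}, \dots, \{1,2\}$ in that (top-down) order costs $\Theta(n-1) + \Theta(n-2) + \dots = \Theta(n^{2})$ total re-pointering, so some ordering argument (e.g., processing $\C(T_B)\setminus\C(T_A)$ bottom-up, so that the nodes grouped under $C$ are exactly $C$'s children in the final tree and the total grouping work telescopes to $O(n)$) is indispensable. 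Second, even with the right order, determining the set $X$ at an insertion site $v$ without scanning all of $v$'s children requires care: $v$ may have degree $\Theta(n)$ over many insertions, and $X$ cannot simply be read off from the children of the corresponding node $w$ in $T_B$, because clusters of $T_A$ can be nested strictly between $\Lambda(T_B[w])$ and the clusters of $w$'s children (e.g., $T_A$ containing $\{1,2,3\}$, $T_B$ containing $\{1,2\}$ and $\{1,2,3,4\}$), forcing an upward-walk-with-marking or similar device whose total cost must itself be amortized. Without these ingredients, the stated time bound is unsupported, and supplying them amounts to reconstructing the procedure of~\cite{JSS_13a} that the lemma cites.
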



\section{Constructing the majority rule~(+) consensus tree}
\label{section: Majority_rule_plus_consensus_tree}

This section presents an algorithm named \texttt{Maj\_Rule\_Plus} for
computing the majority rule~(+) consensus tree of~$\SSS$ in
(optimal) $O(k n)$ time.

The pseudocode of \texttt{Maj\_Rule\_Plus} is given in
Fig.~\ref{figure: Algorithm_Maj_Rule_Plus}.
The algorithm has two phases.
Phase~1 examines the input trees, one by one, to construct a set of candidate
clusters that includes all majority~(+) clusters.
Then, Phase~2 removes all candidate clusters that are not
majority~(+)~clusters.\footnote{This
basic strategy was previously used in the $O(k n)$-time algorithm
in~\cite{JSS_13b} for computing the majority rule consensus tree.
\comment{ 
The main difference is how the counters are updated in Phase~1; instead of
producing a superset of the majority clusters as in~\cite{JSS_13b}, we now
produce a superset of the majority~(+) clusters.
} 
}

During Phase~1, the current candidate clusters are stored as nodes in
a tree~$T$.
Every node~$v$ in~$T$ represents a current candidate cluster~$\Lambda(T[v])$
and has a counter $\mathit{count}(v)$ that,
starting from the iteration at which~$\Lambda(T[v])$ became a candidate
cluster, keeps track of the number of input trees in which it occurs minus
the number of input trees that are incompatible with it.
More precisely, while treating the tree~$T_j$ for any
$j \in \{2, 3, \dots, k\}$ in Step~\ref{substep: MRP_Phase_1_update_counters},
$\mathit{count}(v)$ for each current candidate cluster~$\Lambda(T[v])$
is updated as follows:
if $\Lambda(T[v])$ occurs in~$T_j$ then $\mathit{count}(v)$ is incremented
by~$1$,
if $\Lambda(T[v])$ does not occur in~$T_j$ and is not compatible with~$T_j$
then $\mathit{count}(v)$ is decremented by~$1$,
and otherwise (i.e., $\Lambda(T[v])$ does not occur in~$T_j$ but is
compatible with~$T_j$) $\mathit{count}(v)$ is unchanged.
Furthermore, if any $\mathit{count}(v)$ reaches~$0$ then the node~$v$ is
deleted from~$T$ so that $\Lambda(T[v])$ is no longer a current candidate
cluster.
Next, in Step~\ref{substep: MRP_Phase_1_insert_compatible_clusters}, every
cluster occurring in~$T_j$ that is not a current candidate but compatible
with~$T$ is inserted into~$T$ (thus becoming a current candidate cluster)
and its counter is initialized to~$1$.
Lemma~\ref{lemma: MRP_Phase_1_correctness} below proves that the set of
majority~(+) clusters of~$\SSS$ is contained in the set of candidate clusters
at the end of Phase~1.

\begin{figure}[t!]
\begin{center}
\fbox{
\begin{algorithm2}{\textnormal{\texttt{Maj\_Rule\_Plus}}}
\smallskip
\INPUT{
  A set $\SSS \,=\, \{T_1, T_2, \dots, T_k\}$ of trees with
  $\Lambda(T_1) = \Lambda(T_2) = \dots = \Lambda(T_k)$.
}
\OUTPUT{
  The majority rule (+) consensus tree of~$\SSS$.
}
\smallskip

\nullSTEP{
  /* Phase 1 */
}

\STEP{
  $T := T_1$
}

\STEP{
  \textbf{for} each $v \in V(T)$ \textbf{do}
  $\mathit{count}(v) := 1$
}

\STEP{
\label{step: MRP_Phase_1_main_loop}
  \textbf{for} $j := 2$ \textbf{to} $k$ \textbf{do}
  \subSTEP{\hspace*{5mm}
\label{substep: MRP_Phase_1_update_counters}
  \textbf{for} each $v \in V(T)$ \textbf{do}
  }
    \nullSTEP{\hspace*{10mm}
    \textbf{if} $\Lambda(T[v])$ occurs in~$T_j$ \textbf{then}
    $\mathit{count}(v) := \mathit{count}(v) + 1$
    }
    \nullSTEP{\hspace*{10mm}
    \textbf{else if} $\Lambda(T[v])$~is not compatible with~$T_j$ \textbf{then}
      $\mathit{count}(v) := \mathit{count}(v) - 1$
    }
  \nullSTEP{\hspace*{5mm}
  \textbf{endfor}
  }
  \subSTEP{\hspace*{5mm}
\label{substep: MRP_Phase_1_delete_loop}
  \textbf{for} each $v \in V(T)$ in top-down order \textbf{do}
  }
    \nullSTEP{\hspace*{10mm}
    \textbf{if} $\mathit{count}(v) = 0$ \textbf{then}
    $\mathit{delete}$ node~$v$.
    }
  \subSTEP{\hspace*{5mm}
\label{substep: MRP_Phase_1_insert_compatible_clusters}
  \textbf{for} every $C \in \C(T_j)$ that is compatible with~$T$ but
  does not occur in~$T$ \textbf{do}
  }
    \nullSTEP{\hspace*{10mm}
    Insert $C$ into~$T$.
    }
    \nullSTEP{\hspace*{10mm}
    Initialize $\mathit{count}(v) := 1$ for the new node~$v$ satisfying
    $\Lambda(T[v]) = C$.
    }
  \nullSTEP{\hspace*{5mm}
  \textbf{endfor}
  }
}
\nullSTEP{\textbf{endfor}}

\medskip

\nullSTEP{
  /* Phase 2 */
}

\STEP{
  \textbf{for} each $v \in V(T)$ \textbf{do}
  $\mathit{K}(v) := 0$;\, $\mathit{Q}(v) := 0$
}

\STEP{
\label{step: MRP_Phase_2_main_loop}
  \textbf{for} $j := 1$ \textbf{to} $k$ \textbf{do}
  \subSTEP{\hspace*{5mm}
  \textbf{for} each $v \in V(T)$ \textbf{do}
\label{substep: MRP_Phase_2_count_occurrences}
  }
  \nullSTEP{\hspace*{10mm}
  \textbf{if} $\Lambda(T[v])$ occurs in~$T_j$ \textbf{then}
    $\mathit{K}(v) := \mathit{K}(v) + 1$  
  }
  \nullSTEP{\hspace*{10mm}
  \textbf{else if} $\Lambda(T[v])$~is not compatible with~$T_j$ \textbf{then}
    $\mathit{Q}(v) := \mathit{Q}(v) + 1$  
  }
  \nullSTEP{\hspace*{5mm}
  \textbf{endfor}
  }
}

\STEP{
\label{step: MRP_Phase_2_delete_loop}
  \textbf{for} each $v \in V(T)$ in top-down order \textbf{do}
  \nullSTEP{\hspace*{5mm}
  \textbf{if} $\mathit{K}(v) \leq \mathit{Q}(v)$ \textbf{then}
    perform a $\mathit{delete}$ operation on~$v$.
  }
}

\STEP{
  \RETURN{$T$}
}
\end{algorithm2}
}
\caption{Algorithm~\texttt{Maj\_Rule\_Plus}
for constructing the majority rule~(+) consensus tree.}
\label{figure: Algorithm_Maj_Rule_Plus}
\end{center}
\end{figure}

\begin{lemma}
\label{lemma: MRP_Phase_1_correctness}
For any $C \subseteq L$, if $C$ is a majority~(+) cluster of~$\SSS$
then $C \in \C(T)$ at the end of Phase~1.
\end{lemma}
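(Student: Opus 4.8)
I would first reduce the statement to a single fixed cluster. Fix $C \subseteq L$ and assume $C$ is a majority~(+) cluster, i.e.\ $|K_C(\SSS)| > |Q_C(\SSS)|$. For each $j \in \{1,\dots,k\}$ record the \emph{signed vote} $x_j := +1$ if $C$ occurs in $T_j$, $\,x_j := -1$ if $C \not\smile T_j$, and $x_j := 0$ otherwise; then $\sum_{j=1}^{k} x_j = |K_C(\SSS)| - |Q_C(\SSS)| > 0$. I would prove the contrapositive: if $C \notin \C(T)$ at the end of Phase~1, then $\sum_{j=1}^{k} x_j \le 0$. The difficulty is that the counter attached to $C$ does not evolve in isolation: $C$ may be deleted (Step~\ref{substep: MRP_Phase_1_delete_loop}) and later fail to be reinserted (Step~\ref{substep: MRP_Phase_1_insert_compatible_clusters}) because another candidate is incompatible with it. So I would set up an amortized argument that tracks $C$ \emph{together with} the candidates that block it.

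\textbf{Key structural observation.} The crucial ingredient is a duality between $C$ and the candidates incompatible with it, resting on the fact that the candidates stored in~$T$ always form a laminar family and hence are pairwise compatible. Suppose $D$ is a current candidate cluster with $D \not\smile C$ (write $\mathit{count}(D)$ for the counter of its node). If $C$ occurs in $T_j$ (a $+1$ vote), then, since $D$ is incompatible with a cluster of $T_j$, we have $D \not\smile T_j$ and $D \notin \C(T_j)$; hence $\mathit{count}(D)$ is \emph{decremented} in Step~\ref{substep: MRP_Phase_1_update_counters}. Conversely, if such a $D$ occurs in $T_j$ (so $\mathit{count}(D)$ is incremented), then $C \not\smile T_j$, i.e.\ $x_j = -1$. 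Thus every event that raises the ``incompatible mass'' around $C$ is paid for by a $-1$ vote of~$C$, and every $+1$ vote of~$C$ erodes that mass.

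\textbf{The potential.} I would define, after each iteration~$j$,
\[
\Phi_j := \begin{cases} \mathit{count}(C) & \text{if $C$ occurs in the current tree $T$,}\\ -\max\{\mathit{count}(D) : D \text{ a current candidate},\ D \not\smile C\} & \text{otherwise,} \end{cases}
\]
with the maximum taken as $0$ when no current candidate is incompatible with~$C$ (so $\Phi_j = 0$ when $C$ is absent but compatible with $T$). The goal is the invariant $\Phi_j \ge \sum_{i=1}^{j} x_i$, proved by induction on~$j$, with the initialization $T := T_1$ (all counters~$1$) as the base case and, for each $j \ge 2$, pushing the invariant through the three sub-steps in order (update, delete, insert). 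The observation above makes each sub-step go through: a $+1$ vote either increments $\mathit{count}(C)$ or lowers the blocking maximum by exactly~$1$ (and, once that maximum hits~$0$, lets $C$ be reinserted with counter~$1$); a $-1$ vote lowers $\mathit{count}(C)$ by~$1$ or raises the blocking maximum by at most~$1$; a $0$ vote never raises it. Finally, if $C \notin \C(T)$ at the end, then $\Phi_k \le 0$, whence $\sum_{i=1}^{k} x_i \le 0$, contradicting that $C$ is a majority~(+) cluster.

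\textbf{Main obstacle.} I expect the real work to lie in the regime \emph{transitions} of~$\Phi$ and in the effect of the deletion step on the blocking maximum. Concretely: (i)~at the moment $\mathit{count}(C)$ reaches~$0$ and $C$ is deleted, I must verify that the invariant is preserved just after the switch to the $-\max\{\cdots\}$ branch; (ii)~when several candidates incompatible with~$C$ are deleted at once, I must confirm the maximum drops by exactly the predicted amount; and (iii)~I must justify that the \emph{maximum} of incompatible counters is the right quantity — the \emph{sum} can jump by more than one when two nested incompatible candidates both occur in the same $T_j$ (e.g.\ $D_1 \subsetneq D_2$ with $D_1, D_2 \not\smile C$), which would break the bound, whereas the maximum changes by at most~$1$ per iteration. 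Establishing the pairwise-compatibility (laminar) property of the candidate set, together with the two linkage facts of the structural observation, is precisely what makes these transitions safe.
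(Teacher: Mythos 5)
Your overall strategy---signed votes, laminarity of the candidate set, and the two linkage facts tying $C$'s votes to the blockers' counter updates---is sound, and it is a genuinely different (and more rigorous) formalization than the paper's own proof, which is a terse charging argument: each tree of $Q_C(\SSS)$ is charged with cancelling at most one occurrence of $C$, either by one decrement of $C$'s counter or by blocking at most one later reinsertion, so $|K_C(\SSS)|-|Q_C(\SSS)|>0$ leaves a surplus. Your point~(iii) is also exactly right: the \emph{maximum} blocking counter is the correct measure because an occurrence of $C$ in $T_j$ decrements \emph{every} blocking candidate simultaneously. However, the invariant you propose is false as stated, and it fails precisely at the transition you flag as obstacle~(i), so this is a genuine gap, not a verification detail. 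Concretely, take $L=\{a,b,c\}$, $C=\{a,b\}$, $T_1$ containing the cluster $\{a,b\}$, $T_2$ containing $\{b,c\}$, and $T_3=T_4=T_1$ (so $C$ is a majority~(+) cluster). After iteration $2$ the vote sum is $x_1+x_2=0$, but $\mathit{count}(C)$ has hit $0$, $C$ has been deleted in Step~\ref{substep: MRP_Phase_1_delete_loop}, and $\{b,c\}$ has been inserted with counter $1$ in Step~\ref{substep: MRP_Phase_1_insert_compatible_clusters}; hence your $\Phi_2=-1<0=\sum_{i\le 2}x_i$. The root cause is that a single $-1$ vote can make your $\Phi$ drop by \emph{two}: it decrements $\mathit{count}(C)$ from $1$ to $0$, and then the insert step of the same iteration installs fresh blockers with counter $1$. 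Your dichotomy (``a $-1$ vote lowers $\mathit{count}(C)$ by~$1$ \emph{or} raises the blocking maximum by at most~$1$'') treats these as exclusive alternatives, but at the deletion transition both happen at once.

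The argument can be repaired by an off-by-one adjustment of the potential in the non-candidate branch: with $M_j:=\max\{\mathit{count}(D): D \textnormal{ a current candidate},\ D\not\smile C\}$ (and $M_j:=0$ if there is none), set $\Phi_j:=\mathit{count}(C)$ when $C$ is a candidate and $\Phi_j:=-\max(0,\,M_j-1)$ otherwise. This is the right normalization for two symmetric reasons. First, when $C$ is deleted, laminarity guarantees there were no blockers at the start of that iteration, so every blocker present afterwards was inserted in that very iteration with counter exactly~$1$; thus $M_j\le 1$ and the new potential is $0$, a drop of exactly~$1$. Second, reinsertion of $C$ requires only $M_{j-1}\le 1$ (not $M_{j-1}=0$) at the start of a $+1$ iteration: by your linkage fact the update step decrements every blocker to $0$ and the delete step removes them before the insert step runs, so $C$ re-enters with counter $1$ and the potential rises from $0$ to $1$. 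With this $\Phi$, every case satisfies $\Phi_j\ge \Phi_{j-1}+x_j$ (in particular, on a $+1$ vote with $M_{j-1}\ge 2$ no new blockers can appear and all old ones are decremented, and on a $0$ vote no blocker can be incremented or inserted), the base case $j=1$ holds, and---crucially, since the non-candidate branch is capped at $0$---the endgame still works: if $C\notin\C(T)$ after Phase~1 then $\sum_{i=1}^{k}x_i\le \Phi_k\le 0$, contradicting $|K_C(\SSS)|>|Q_C(\SSS)|$. (Note that merely weakening your invariant to ``$\Phi_j\ge\sum_{i\le j}x_i-1$ when $C$ is absent'' would not suffice, as it only yields $\sum_{i\le k}x_i\le 1$ at the end, which is consistent with $C$ being a majority~(+) cluster.)
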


\begin{proof}
Suppose that $C$ is a majority~(+) cluster of~$\SSS$.
Let $T_x$ be any tree in~$Q_{C}(\SSS)$ and consider iteration~$x$ in
Step~\ref{step: MRP_Phase_1_main_loop}:
If $C$ is a current candidate at the beginning of iteration~$x$ then its
counter will be decremented, cancelling out the occurrence of~$C$ in
one tree~$T_j$ where $1 \leq j < x$;
otherwise, $C$~may be prevented from being inserted into~$T$ in at most one
later iteration~$j$ (where $x < j \leq k$ and $C \in \C(T_j)$)
because of some cluster occurring in~$T_x$.
It follows from $|K_{C}(\SSS)| - |Q_{C}(\SSS)| > 0$ that $C$'s counter will
be greater than~$0$ at the end of Phase~1, and therefore $C \in \C(T)$.
\qed
\end{proof}

In Phase~2, Step~\ref{step: MRP_Phase_2_main_loop} of the algorithm computes
the values of $|K_{C}(\SSS)|$ and $|Q_{C}(\SSS)|$ for every candidate
cluster~$C$ and stores them in $\mathit{K}(v)$ and $\mathit{Q}(v)$,
respectively, where $C = \Lambda(T[v])$.
Finally, Step~\ref{step: MRP_Phase_2_delete_loop} removes every candidate
cluster~$C$ that does not satisfy the condition
$|K_{C}(\SSS)| > |Q_{C}(\SSS)|$.
By definition, the clusters that remain in~$T$ are the majority~(+) clusters.

\begin{theorem}
Algorithm~\textnormal{\texttt{Maj\_Rule\_Plus}}
constructs the majority rule~(+) consensus tree of~$\SSS$ in $O(k n)$ time.
\end{theorem}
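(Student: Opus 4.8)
The plan is to split the argument into correctness and running time, handling the two phases in turn. For correctness I would first invoke Lemma~\ref{lemma: MRP_Phase_1_correctness}, which guarantees that every majority~(+) cluster of~$\SSS$ survives as a candidate (a node of~$T$) at the end of Phase~1, so it only remains to check that Phase~2 retains exactly the majority~(+) clusters among the candidates. I would verify that after the loop in Step~\ref{step: MRP_Phase_2_main_loop} each node~$v$ with $C = \Lambda(T[v])$ satisfies $K(v) = |K_{C}(\SSS)|$ and $Q(v) = |Q_{C}(\SSS)|$, since the loop scans all $k$~trees and increments $K(v)$ precisely when $C$ occurs in~$T_j$ and $Q(v)$ precisely when $C \not\smile T_j$. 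The delete pass in Step~\ref{step: MRP_Phase_2_delete_loop} then removes~$v$ iff $K(v) \le Q(v)$, i.e., keeps~$C$ iff $|K_{C}(\SSS)| > |Q_{C}(\SSS)|$, which is exactly the defining condition of a majority~(+) cluster. Combined with the Phase~1 guarantee, this yields that the final $\C(T)$ is precisely the set of all majority~(+) clusters. Because the candidates are at all times stored as the node set of a single tree, they are pairwise compatible, so the survivors form a valid phylogenetic tree on~$L$; hence $T$ is the majority rule~(+) consensus tree.

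For the running time the crux is to show that each iteration~$j$ of the two main loops costs $O(n)$, whereupon the $O(kn)$ bound follows by summing over~$j$ and adding the $O(n)$ initialization steps. I would first establish the invariant that $T$ is always a phylogenetic tree on~$L$, so $|V(T)| = O(n)$ and the delete and insert operations together touch only $O(n)$ nodes per iteration. Within one Phase~1 iteration I would classify every candidate $\Lambda(T[v])$ relative to~$T_j$ as ``occurs'', ``incompatible'', or ``compatible but non-occurring''. Occurrence is decided by Lemma~\ref{lemma: Day's_algorithm}: after $O(n)$ preprocessing with~$T_j$ as reference, each query $\Lambda(T[v]) \in \C(T_j)$ takes $O(1)$, hence $O(n)$ over all~$v$. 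Incompatibility is detected via Lemma~\ref{lemma: Procedure_One-Way_Compatible}: computing \texttt{One-Way\_Compatible}$(T, T_j)$ in $O(n)$ time deletes exactly the clusters of~$T$ not compatible with~$T_j$, so marking which nodes of~$T$ survive identifies the incompatible candidates, and the remaining non-occurring ones are compatible. This supplies everything needed by Step~\ref{substep: MRP_Phase_1_update_counters} in $O(n)$ time, and the delete loop costs $O(n)$. For the insertion in Step~\ref{substep: MRP_Phase_1_insert_compatible_clusters} I would compute \texttt{One-Way\_Compatible}$(T_j, T)$ to extract the clusters of~$T_j$ compatible with~$T$ and then apply \texttt{Merge\_Trees} (Lemma~\ref{lemma: Procedure_Merge_Trees}) to fold them into~$T$ in $O(n)$ time, initializing each new counter to~$1$.

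Phase~2 is analogous but simpler: each iteration reuses the same occurrence-and-compatibility classification to update $K(v)$ and $Q(v)$ in $O(n)$ time, and a single $O(n)$ delete pass finishes it, so Phase~2 also runs in $O(kn)$; summing gives the claimed $O(kn)$ total. The main obstacle I expect is the bookkeeping needed to perform the compatibility classification and the insertion correctly within a single $O(n)$ pass, i.e., aligning the outputs of \texttt{One-Way\_Compatible} on $(T, T_j)$ and on $(T_j, T)$ back to the nodes of~$T$ and~$T_j$ so that counters are updated and new clusters inserted without exceeding $O(n)$ work and without ever storing mutually incompatible clusters. Maintaining the invariant that $T$ stays a phylogenetic tree on~$L$ of size $O(n)$ after every delete and insert is precisely what makes this accounting go through.
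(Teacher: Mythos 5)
Your proposal is correct and takes essentially the same route as the paper's own proof: correctness via Lemma~\ref{lemma: MRP_Phase_1_correctness} together with the observation that Phase~2 computes $|K_C(\SSS)|$ and $|Q_C(\SSS)|$ exactly and retains precisely the clusters with $K(v) > Q(v)$, and the $O(kn)$ bound via Day's algorithm (Lemma~\ref{lemma: Day's_algorithm}) for occurrence tests, \texttt{One-Way\_Compatible}$(T,T_j)$ for incompatibility tests, and \texttt{One-Way\_Compatible}$(T_j,T)$ followed by \texttt{Merge\_Trees} for insertions, each costing $O(n)$ per iteration. The only detail the paper spells out that you leave implicit is why the delete passes cost $O(n)$: the nodes are processed in top-down order, so each node's parent changes at most once per iteration.
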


\begin{proof}
The correctness follows from Lemma~\ref{lemma: MRP_Phase_1_correctness} and
the above discussion.


The time complexity analysis is analogous to the proof of Theorem~4
in~\cite{JSS_13b}.
First consider Phase~1.
Step~\ref{substep: MRP_Phase_1_update_counters} takes $O(n)$ time by:
(1)~running Day's algorithm with $T_{\mathit{ref}} = T_j$ and then checking
each node~$v$ in~$V(T)$ to see if $\Lambda(T[v])$ occurs in~$T_j$
(according to Lemma~\ref{lemma: Day's_algorithm}, this requires $O(n)$
time for preprocessing, and each of the $O(n)$ nodes in~$V(T)$ may be checked
in $O(1)$ time), and
(2)~computing $X := \texttt{One-Way\_Compatible}(T, T_j)$ and then
checking for each node~$v$ in~$V(T)$ if $v$ does not exist in~$X$ to
determine if $\Lambda(T[v]) \not\smile T_j$
(this takes $O(n)$ time by
Lemma~\ref{lemma: Procedure_One-Way_Compatible}).
The $\mathit{delete}$ operations in Step~\ref{substep: MRP_Phase_1_delete_loop}
take $O(n)$ time because the nodes are handled in top-down order, which
means that for every node, its parent will change at most once in each
iteration.
In Step~\ref{substep: MRP_Phase_1_insert_compatible_clusters}, define
$Y := \texttt{One-Way\_Compatible}(T_j, T)$ and
$Z := \texttt{Merge\_Trees}(Y, T)$.
Then by Lemmas~\ref{lemma: Procedure_One-Way_Compatible}
and~\ref{lemma: Procedure_Merge_Trees}, the cluster collection of~$Y$
consists of the clusters occurring in~$T_j$ that are compatible with the set
of current candidates, and $Z$~is the result of inserting these clusters
into~$T$.
Thus, Step~\ref{substep: MRP_Phase_1_insert_compatible_clusters} can be
implemented by computing $Y$ and~$Z$, updating $T$'s structure according
to~$Z$, and setting the counters of all new nodes to~$1$, so
Step~\ref{substep: MRP_Phase_1_insert_compatible_clusters} takes $O(n)$ time.
The main loop in Step~\ref{step: MRP_Phase_1_main_loop} consists of
$O(k)$ iterations, and Phase~1 therefore takes $O(k n)$ time in total.

Next, Phase~2 also takes $O(k n)$ time because
Step~\ref{substep: MRP_Phase_2_count_occurrences} can be implemented in
$O(n)$ time with the same techniques as in
Step~\ref{substep: MRP_Phase_1_update_counters}, and
Step~\ref{step: MRP_Phase_2_delete_loop} is performed in $O(n)$ time by
handling the nodes in top-down order so that each node's parent is changed
at most once, as in Step~\ref{substep: MRP_Phase_1_delete_loop}.
\qed
\end{proof}


\section{Constructing the frequency difference consensus tree}
\label{section: Frequency_difference_consensus_tree}

Here, we present an algorithm for finding the frequency consensus tree
of~$\SSS$ in $\min \{O(k n^{2})$, $O(k n (k + \log^{2}n))\}$ time.
It is called \texttt{Frequency\_Difference} and is described in
Section~\ref{subsection: Algorithm_Frequency_Difference} below.
The algorithm uses the procedure \texttt{Merge\_Trees} as well as
a new procedure named \texttt{Filter\_Clusters} whose details are given in
Section~\ref{subsection: Procedure_Filter_Clusters}.

For each tree~$T_j \in \SSS$ and each node~$u \in V(T_j)$, define
the \emph{weight of~$u$} as the value $|K_{\Lambda(T_j[u])}(\SSS)|$,
i.e., the number of trees from~$\SSS$ in which the cluster~$\Lambda(T_j[u])$
occurs, and denote it by~$w(u)$.
For convenience, also define $w(C) = w(u)$, where $C = \Lambda(T_j[u])$.
The input to Procedure~\texttt{Filter\_Clusters} is two trees~$T_A$, $T_B$
with $\Lambda(T_A) = \Lambda(T_B) = L$ such that every cluster occurring
in~$T_A$ or~$T_B$ also occurs in at least one tree in~$\SSS$,
and the output is a copy of~$T_A$ in which every cluster that is incompatible
with some cluster in~$T_B$ with a higher weight has been removed.
Formally, the output of~\texttt{Filter\_Clusters} is a tree~$T$ with
$\Lambda(T) = L$ such that
$\C(T) = \{\Lambda(T_A[u]) \,:\,
  u \in V(T_A) \textnormal{ and }
  w(u) > w(x) \textnormal{ for every } x \in V(T_B) \textnormal{ with }
  \Lambda(T_A[u]) \not\smile \Lambda(T_B[x])\}$.


\subsection{Algorithm~\texttt{Frequency\_Difference}}
\label{subsection: Algorithm_Frequency_Difference}

We first describe Algorithm~\texttt{Frequency\_Difference}.
Refer to Fig.~\ref{figure: Algorithm_Frequency_Difference} for the pseudocode.

\begin{figure}[t!]
\begin{center}
\fbox{
\begin{algorithm2}{\textnormal{\texttt{Frequency\_Difference}}}
\smallskip
\INPUT{
  A set $\SSS \,=\, \{T_1, T_2, \dots, T_k\}$ of trees with
  $\Lambda(T_1) = \Lambda(T_2) = \dots = \Lambda(T_k)$.
}
\OUTPUT{
  The frequency difference consensus tree of~$\SSS$.
}
\smallskip

\nullSTEP{
  /* Preprocessing */
}

\STEP{
\label{step: FD_compute_weights}
  Compute $w(C)$ for every cluster $C$ occurring in~$\SSS$.
}

\medskip

\nullSTEP{
  /* Main algorithm */
}

\STEP{
\label{step: FD_start_of_main_algorithm}
  $T := T_1$
}

\STEP{
\label{step: FD_build_forward_frequency_difference}
  \textbf{for} $j := 2$ \textbf{to} $k$ \textbf{do}
  \nullSTEP{\hspace*{5mm}
    $A := \texttt{Filter\_Clusters}(T, T_j)$;\,
    $B := \texttt{Filter\_Clusters}(T_j, T)$
  }
  \nullSTEP{\hspace*{5mm}
    $T := \texttt{Merge\_Trees}(A, B)$
  }
  \textbf{endfor}
}

\STEP{
\label{step: FD_remove_non-frequency_difference_clusters}
  \textbf{for} $j := 1$ \textbf{to} $k$ \textbf{do}
  $T := \texttt{Filter\_Clusters}(T, T_j)$
}

\STEP{
  \RETURN{$T$}
}
\end{algorithm2}
}
\caption{Algorithm~\texttt{Frequency\_Difference}
for constructing the frequency difference consensus tree.}
\label{figure: Algorithm_Frequency_Difference}
\end{center}
\end{figure}

The algorithm starts by computing the weight~$w(C)$ of every cluster~$C$
occurring in~$\SSS$ in a preprocessing step
(Step~\ref{step: FD_compute_weights}).
Next, let $\C(S)$ for any set~$S$ of trees denote the union
$\bigcup_{T_i \in S} \C(T_i)$,
and for any $j \in \{1,2,\dots,k\}$, define a
\emph{forward frequency difference consensus tree of~$\{T_1,T_2,\dots,T_j\}$}
as any tree that includes every cluster~$C$ in $\C(\{T_1,T_2,\dots,T_j\})$
satisfying $w(C) > w(X)$ for all $X \in \C(\{T_1,T_2,\dots,T_j\})$ with
$C \not\smile X$.
Steps~\ref{step: FD_start_of_main_algorithm}--\ref{step: FD_build_forward_frequency_difference}
use Procedure~\texttt{Filter\_Clusters} from
Section~\ref{subsection: Procedure_Filter_Clusters} to build a tree~$T$ that,
after any iteration $j \in \{1,2,\dots,k\}$, is a forward frequency difference
consensus tree of~$\{T_1,T_2,\dots,T_j\}$,
as proved in Lemma~\ref{lemma: forward_frequency_difference} below.
After iteration~$k$, $\C(T)$~contains all frequency difference clusters
of~$\SSS$ but possibly some other clusters as well, so
Step~\ref{step: FD_remove_non-frequency_difference_clusters} applies
\texttt{Filter\_Clusters} again to remove all non-frequency difference
clusters of~$\SSS$ from~$T$.

\begin{lemma}
\label{lemma: forward_frequency_difference}
For any $j \in \{2,3,\dots,k\}$, suppose that $T$ is a forward frequency
difference consensus tree of $\{T_1,T_2,\dots,T_{j-1}\}$.
Let $A :=$ \textnormal{\texttt{Filter\_Clusters}}$(T, T_j)$ and
$B :=$ \textnormal{\texttt{Filter\_Clusters}}$(T_j, T)$.
Then \textnormal{\texttt{Merge\_Trees}}$(A, B)$
is a forward frequency difference consensus tree of $\{T_1,T_2,\dots,T_j\}$.
\end{lemma}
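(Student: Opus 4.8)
The plan is to verify the two things required for $\texttt{Merge\_Trees}(A,B)$ to be a forward frequency difference consensus tree of $\{T_1,\dots,T_j\}$: that $A$ and~$B$ are compatible, so that $\texttt{Merge\_Trees}$ is applicable and (by Lemma~\ref{lemma: Procedure_Merge_Trees}) returns a tree whose cluster collection is exactly $\C(A)\cup\C(B)$; and that this collection contains every ``winning'' cluster at stage~$j$. Throughout I would abbreviate $\C_i := \C(\{T_1,\dots,T_i\})$ and call a cluster $C\in\C_i$ a \emph{forward winner at stage~$i$} if $w(C)>w(X)$ for every $X\in\C_i$ with $C\not\smile X$; the target is then to show that the forward winners at stage~$j$ all lie in $\C(A)\cup\C(B)$.

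Before starting I would record that the tree~$T$ supplied here satisfies $\C(T)\subseteq\C_{j-1}$, a fact I will use repeatedly. This holds by a one-line induction for the tree actually built by the algorithm, since $\C(\texttt{Merge\_Trees}(A,B))=\C(A)\cup\C(B)\subseteq\C(T)\cup\C(T_j)\subseteq\C_{j-1}\cup\C(T_j)=\C_j$, and it is genuinely needed: without an upper bound on $\C(T)$ a spurious high-weight cluster could knock a legitimate new winner out of~$B$. I would also note the monotonicity observation that a stage-$j$ forward winner that happens to lie in $\C_{j-1}$ is automatically a stage-$(j-1)$ forward winner, because passing from $\C_j$ down to $\C_{j-1}\subseteq\C_j$ only shrinks its set of incompatible competitors; by the hypothesis such a cluster therefore already lies in $\C(T)$.

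The hard part, and the heart of the argument, will be the compatibility of $A$ and~$B$, which I would settle by a short weight-contradiction. Suppose $C_A\in\C(A)\subseteq\C(T)$ and $C_B\in\C(B)\subseteq\C(T_j)$ with $C_A\not\smile C_B$. Since $C_A$ survived $\texttt{Filter\_Clusters}(T,T_j)$ and $C_B$ is an incompatible cluster occurring in~$T_j$, the defining property of $\texttt{Filter\_Clusters}$ forces $w(C_A)>w(C_B)$; symmetrically, since $C_B$ survived $\texttt{Filter\_Clusters}(T_j,T)$ and $C_A$ is an incompatible cluster occurring in~$T$, it forces $w(C_B)>w(C_A)$. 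The two inequalities are contradictory, so no incompatible pair exists, $A$ and~$B$ are compatible, and Lemma~\ref{lemma: Procedure_Merge_Trees} gives $\C(\texttt{Merge\_Trees}(A,B))=\C(A)\cup\C(B)$.

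It then remains to show that every stage-$j$ forward winner~$C$ lies in $\C(A)\cup\C(B)$, which I would split on where $C$ occurs. If $C\in\C_{j-1}$, then by the monotonicity observation $C\in\C(T)$, and since $C$ out-weighs every incompatible cluster of $\C_j\supseteq\C(T_j)$ it passes $\texttt{Filter\_Clusters}(T,T_j)$, so $C\in\C(A)$. Otherwise $C\in\C(T_j)$; using $\C(T)\subseteq\C_{j-1}\subseteq\C_j$, every incompatible cluster of~$T$ lies in $\C_j$ and is hence out-weighed by~$C$, so $C$ passes $\texttt{Filter\_Clusters}(T_j,T)$ and $C\in\C(B)$. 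As $\C_j=\C_{j-1}\cup\C(T_j)$, one case always applies, giving $C\in\C(A)\cup\C(B)$. Combined with the compatibility step and the inclusion $\C(A)\cup\C(B)\subseteq\C_j$ that re-establishes the invariant, this shows $\texttt{Merge\_Trees}(A,B)$ is a forward frequency difference consensus tree of $\{T_1,\dots,T_j\}$, completing the inductive step.
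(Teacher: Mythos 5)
Your proof is correct and follows essentially the same route as the paper's own proof: first compatibility of $A$ and $B$ via the mutual weight-inequality contradiction from the two \texttt{Filter\_Clusters} calls, then coverage of every stage-$j$ winner split according to whether it lies in $\C(\{T_1,\dots,T_{j-1}\})$ (hence in $\C(T)$, surviving into $A$) or in $\C(T_j)$ (surviving into $B$). The only difference is that you make explicit the invariant $\C(T)\subseteq\C(\{T_1,\dots,T_{j-1}\})$, which the paper's argument uses tacitly when asserting that filtering $T_j$ against $T$ removes no stage-$j$ winner --- a worthwhile clarification, since the paper's definition of a forward frequency difference consensus tree only bounds $\C(T)$ from below.
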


\comment{ 
\begin{proof}
We first show that $A$ and $B$ are compatible.
For the sake of obtaining a contradiction, suppose that $A$ and~$B$ are
not compatible.
This means there exist two clusters~$C_A \in \C(A)$ and $C_B \in \C(B)$
such that $C_A \not\smile C_B$.
However, $A :=$ \texttt{Filter\_Clusters}$(T, T_j)$ means that
$w(C_A) > w(X)$ for all $X \in \C(T_j)$ with $C_A \not\smile X$, and in
particular, $w(C_A) > w(C_B)$.
Analogously, $B :=$ \texttt{Filter\_Clusters}$(T_j, T)$ means that
$w(C_B) > w(C_A)$, which yields a contradiction.
We conclude that $A$ and~$B$ are compatible.

Next, consider the result of computing \texttt{Merge\_Trees}$(A, B)$.
By definition, $\C(T)$~includes every cluster~$C$ in
$\C(\{T_1,T_2,\dots,T_{j-1}\})$ satisfying $w(C) > w(X)$ for all
$X \in \C(\{T_1,T_2,\dots,T_{j-1}\})$ with $C \not\smile X$.
Observe that:
\begin{enumerate}
\item[(1)]
Since $A :=$ \texttt{Filter\_Clusters}$(T, T_j)$, $\C(A)$~is a subset
of~$\C(T)$ that includes every cluster~$C$ in $\C(\{T_1,T_2,\dots,T_{j-1}\})$
satisfying $w(C) > w(X)$ for all $X \in \C(\{T_1,T_2,\dots,T_j\})$ with
$C \not\smile X$.
\medskip
\item[(2)]
Similarly, $\C(B)$~includes every cluster~$C$ in $\C(T_j)$ satisfying
$w(C) > w(X)$ for all $X \in \C(\{T_1,T_2,\dots,T_{j-1}\})$ with
$C \not\smile X$.
It follows immediately that $\C(B)$~includes every cluster~$C$ in $\C(T_j)$
satisfying $w(C) > w(X)$ for all $X \in \C(\{T_1,T_2,\dots,T_j\})$ with
$C \not\smile X$.
\end{enumerate}
\noindent
By~(1) and~(2), $\C(A) \cup \C(B)$ contains every cluster~$C$ in
$\C(\{T_1,T_2,\dots,T_j\})$ satisfying $w(C) > w(X)$ for all
$X \in \C(\{T_1,T_2,\dots,T_j\})$ with $C \not\smile X$.
This shows that \texttt{Merge\_Trees}$(A, B)$ is a forward frequency
difference consensus tree of $\{T_1,T_2,\dots,T_j\}$.
\qed
\end{proof}
} 

\begin{proof}
(Omitted from the conference version due to space constraints.)
\qed
\end{proof}

\begin{theorem}
\label{theorem: Frequency_Difference}
Algorithm~\textnormal{\texttt{Frequency\_Difference}}
constructs the frequency difference consensus tree of~$\SSS$ in
$\min \{O(k n^{2}),\, O(k^{2} n)\} \,+\, O(k \cdot f(n))$ time, where $f(n)$
is the running time of Procedure~\textnormal{\texttt{Filter\_Clusters}}.
\end{theorem}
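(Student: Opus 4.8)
The plan is to prove the two assertions of Theorem~\ref{theorem: Frequency_Difference} separately: first correctness, then the running time. For correctness I would establish, by induction on~$j$, that after iteration~$j$ of Step~\ref{step: FD_build_forward_frequency_difference} the tree~$T$ is a forward frequency difference consensus tree of $\{T_1,\dots,T_j\}$. The base case $j=1$ holds because $T=T_1$ and all clusters of a single tree are pairwise compatible, so the defining condition is vacuous and $T_1$ includes every required cluster; the inductive step is exactly Lemma~\ref{lemma: forward_frequency_difference}. Hence after iteration~$k$ the tree~$T$ is a forward frequency difference consensus tree of~$\SSS$, so $\C(T)$ contains every cluster $C\in\C(\SSS)$ with $w(C)>w(X)$ for all $X\in\C(\SSS)$ satisfying $C\not\smile X$. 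I would then observe that, since $w(C)=|K_C(\SSS)|$ and any $D\not\smile C$ with $D\notin\C(\SSS)$ has $|K_D(\SSS)|=0$, this condition coincides with the definition of a frequency difference cluster of~$\SSS$. Thus $\C(T)$ contains all frequency difference clusters, together with possibly some additional clusters, all of which occur in~$\SSS$.

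The crux of correctness is to show that Step~\ref{step: FD_remove_non-frequency_difference_clusters} removes exactly the extraneous clusters. I would first record the invariant that every cluster occurring in any intermediate tree occurs in~$\SSS$: this holds for $T_1$ and is preserved by \texttt{Filter\_Clusters} (whose output clusters form a subset of its first argument's) and by \texttt{Merge\_Trees}; in particular the precondition of \texttt{Filter\_Clusters} is met on every call. A genuine frequency difference cluster~$C$ satisfies $w(C)>w(\Lambda(T_j[x]))$ for every~$j$ and every $x\in V(T_j)$ with $C\not\smile\Lambda(T_j[x])$, so it survives each of the $k$ filtering steps; conversely, any $C\in\C(T)$ that is not a frequency difference cluster has some incompatible $X\in\C(\SSS)$ with $w(X)\ge w(C)$, and $X$ occurs in some~$T_j$, so $C$ is deleted when \texttt{Filter\_Clusters} is applied against that~$T_j$ and never reappears. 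Therefore, after Step~\ref{step: FD_remove_non-frequency_difference_clusters}, $\C(T)$ is precisely the set of frequency difference clusters, i.e.\ $T$ is the frequency difference consensus tree of~$\SSS$.

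For the running time I would split the cost into the preprocessing of Step~\ref{step: FD_compute_weights} and the remainder. The weights $w(C)$ for all clusters occurring in~$\SSS$ can be computed in $O(k^2n)$ time: taking each tree in turn as the reference $T_{\mathit{ref}}$ and applying Day's algorithm (Lemma~\ref{lemma: Day's_algorithm}), one tests in $O(1)$ time, for every node of every other tree, whether its cluster occurs in~$T_{\mathit{ref}}$, and accumulates these occurrence counts over the $k$ references in $O(k^2n)$ total time. A complementary $O(kn^2)$-time computation (advantageous when $k=\Omega(n)$) can be given as well, so that the preprocessing runs in $\min\{O(kn^2),O(k^2n)\}$ time. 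For the main algorithm, each iteration of Step~\ref{step: FD_build_forward_frequency_difference} makes two calls to \texttt{Filter\_Clusters} and one call to \texttt{Merge\_Trees}, the latter costing $O(n)$ by Lemma~\ref{lemma: Procedure_Merge_Trees} once the compatibility of $A$ and~$B$ is known (established within the proof of Lemma~\ref{lemma: forward_frequency_difference}), and Step~\ref{step: FD_remove_non-frequency_difference_clusters} makes $k$ further calls to \texttt{Filter\_Clusters}. Hence there are $O(k)$ calls to \texttt{Filter\_Clusters} in total, contributing $O(k\cdot f(n))$, while the $O(k)$ \texttt{Merge\_Trees} calls contribute $O(kn)=O(k\cdot f(n))$ since $f(n)=\Omega(n)$. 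Summing yields the claimed $\min\{O(kn^2),O(k^2n)\}+O(k\cdot f(n))$ bound.

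The step I expect to be the main obstacle is the correctness of the purification in Step~\ref{step: FD_remove_non-frequency_difference_clusters}: one must verify not only that every non-frequency difference cluster is eventually deleted, but also that no frequency difference cluster is ever removed and that the \texttt{Filter\_Clusters} precondition is maintained throughout, all of which hinge on the weight-monotonicity argument above and on the (here deferred) Lemma~\ref{lemma: forward_frequency_difference}. A secondary technical point is realising both branches of the preprocessing bound, that is, exhibiting the $O(kn^2)$-time weight computation to accompany the straightforward $O(k^2n)$ one so that taking the minimum is justified.
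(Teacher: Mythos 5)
Your proposal follows the same route as the paper's proof: correctness comes from applying Lemma~\ref{lemma: forward_frequency_difference} iteratively so that after iteration~$k$ of Step~\ref{step: FD_build_forward_frequency_difference} the tree~$T$ contains every frequency difference cluster of~$\SSS$, after which Step~\ref{step: FD_remove_non-frequency_difference_clusters} eliminates the extraneous clusters; the running time is split into the weight preprocessing of Step~\ref{step: FD_compute_weights} plus $O(k)$ calls to \texttt{Filter\_Clusters} and \texttt{Merge\_Trees}, with the \texttt{Merge\_Trees} cost absorbed into $O(k \cdot f(n))$ because $f(n) = \Omega(n)$. Your correctness argument is in fact more detailed than the paper's (which simply asserts the superset property and the effect of the final filtering): the invariant that every intermediate cluster occurs in~$\SSS$ (so the \texttt{Filter\_Clusters} precondition holds), the observation that clusters outside $\C(\SSS)$ have weight~$0$ and hence cannot block the frequency difference condition, and the check that no genuine frequency difference cluster is ever deleted in Step~\ref{step: FD_remove_non-frequency_difference_clusters} are all correct and worth making explicit.

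The one genuine gap is the $O(k n^{2})$ branch of the preprocessing bound, which you assert (``a complementary $O(k n^{2})$-time computation can be given'') but never exhibit. Without it, your argument only yields $O(k^{2} n) + O(k \cdot f(n))$, which falls short of the claimed $\min \{O(k n^{2}),\, O(k^{2} n)\} + O(k \cdot f(n))$ exactly in the regime $k > n$, where the first term of the minimum is the smaller one. The paper's method for this branch is elementary: fix an ordering of~$L$ and encode each cluster occurring in~$\SSS$ as a bit vector of length~$n$; bottom-up traversals of the $k$ input trees produce all $O(k n)$ such vectors in $O(k n^{2})$ time; radix sort this list ($O(k n)$ strings of $n$ bits each, hence $O(k n^{2})$ time); finally, scan the sorted list once so that, for each distinct cluster~$C$, the number of trees in which it occurs, i.e.\ $w(C)$, is obtained by counting equal consecutive vectors. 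You correctly identified this as the point needing work, but flagging the hole is not the same as filling it, and the theorem as stated does require both branches of the minimum.
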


\begin{proof}
After completing iteration~$k$ of
Step~\ref{step: FD_build_forward_frequency_difference},
$\C(T)$~is a superset of the set of all frequency difference clusters
of~$\SSS$ by Lemma~\ref{lemma: forward_frequency_difference}.
Next, Step~\ref{step: FD_remove_non-frequency_difference_clusters} removes
all non-frequency difference clusters of~$\SSS$, so the output will be
the frequency difference consensus tree of~$\SSS$.


To analyze the time complexity, first consider how to compute all the weights
in Step~\ref{step: FD_compute_weights}.
One method is to first fix an arbitrary ordering of~$L$ and represent every
cluster~$C$ of~$L$ as a bit vector of length~$n$
(for every $i \in \{1,2,\dots,n\}$, the $i$th bit is set to~$1$ if and only if
the $i$th leaf label belongs to~$C$).
Then, spend $O(k n^{2})$ time to construct a list of bit vectors for all
$O(k n)$ clusters occurring in~$\SSS$ by a bottom-up traversal of each tree
in~$\SSS$, sort the resulting list of bit vectors by radix sort, and traverse
the sorted list to identify the number of occurrences of each cluster.
All this takes $O(k n^{2})$ time.
An alternative method, which uses $O(k^{2} n)$ time, is to initialize
the weight of every node in~$\SSS$ to~$1$ and then,
for $j \in \{1,2,\dots,k\}$, apply Day's algorithm
(see Lemma~\ref{lemma: Day's_algorithm})
with $T_{\mathit{ref}} = T_j$ and $T$ ranging over all $T_i$ with
$1 \leq i \leq k,\, i \neq j$ to find all clusters in~$T$ that also occur
in~$T_j$ and increase the weights of their nodes in~$T$ by~$1$.
Therefore, Step~\ref{step: FD_compute_weights} takes
$\min \{O(k n^{2}),\, O(k^{2} n)\}$ time.
Next, Steps~\ref{step: FD_build_forward_frequency_difference}
and~\ref{step: FD_remove_non-frequency_difference_clusters}
make $O(k)$ calls to the procedures
\texttt{Merge\_Trees} and \texttt{Filter\_Clusters}.
The running time of \texttt{Merge\_Trees} is $O(n)$ by
Lemma~\ref{lemma: Procedure_Merge_Trees} and the running time of
\texttt{Filter\_Clusters} is $f(n) = \Omega(n)$,
so Steps~\ref{step: FD_build_forward_frequency_difference}
and~\ref{step: FD_remove_non-frequency_difference_clusters}
take $O(k \cdot f(n))$ time.
\qed
\end{proof}

Lemma~\ref{lemma: Procedure_Filter_Clusters} in the next
subsection shows that $f(n) = O(n \log^{2} n)$ is possible, which yields:

\begin{corollary}
\label{corollary: Frequency_Difference}
Algorithm~\textnormal{\texttt{Frequency\_Difference}}
constructs the frequency difference consensus tree of~$\SSS$ in
$\min \{O(k n^{2}),\, O(k n (k + \log^{2}n))\}$ time.
\end{corollary}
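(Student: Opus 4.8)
The plan is to combine Theorem~\ref{theorem: Frequency_Difference} with the forthcoming bound on Procedure~\texttt{Filter\_Clusters} and then simplify the resulting asymptotic expression. First I would invoke Theorem~\ref{theorem: Frequency_Difference}, which already establishes correctness and gives the total running time as $\min \{O(k n^{2}),\, O(k^{2} n)\} + O(k \cdot f(n))$, where $f(n)$ denotes the running time of \texttt{Filter\_Clusters}. By Lemma~\ref{lemma: Procedure_Filter_Clusters} (proved in the next subsection) we may take $f(n) = O(n \log^{2} n)$, so the additive term becomes $O(k \cdot f(n)) = O(k n \log^{2} n)$. Substituting this in leaves the purely arithmetic task of showing
\[
  \min \{O(k n^{2}),\, O(k^{2} n)\} + O(k n \log^{2} n)
  \;=\; \min \{O(k n^{2}),\, O(k n (k + \log^{2} n))\}.
\]

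For the simplification I would set $A = k n^{2}$, $B = k^{2} n$, and $C = k n \log^{2} n$, so that the left-hand side is $\min\{A, B\} + C$ and, since $k n (k + \log^{2} n) = k^{2} n + k n \log^{2} n = B + C$, the right-hand side is $\min\{A,\, B + C\}$. The key observation is that $C$ can be absorbed into the first branch: because $\log^{2} n = O(n)$, we have $C = k n \log^{2} n = O(k n^{2}) = O(A)$, and hence $\min\{A, B\} + C \le A + C = O(A)$. At the same time, trivially $\min\{A, B\} + C \le B + C$. Combining these two inequalities yields $\min\{A, B\} + C = O(\min\{A,\, B + C\})$, which is exactly the claimed bound.

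The derivation is essentially a one-line substitution followed by this absorption step, so I do not anticipate any genuine obstacle; the only point requiring (minor) care is recognizing that the additive $O(k n \log^{2} n)$ term never worsens the $O(k n^{2})$ branch---precisely because $\log^{2} n = O(n)$---while it is exactly what converts the $O(k^{2} n)$ branch into the stated $O(k n (k + \log^{2} n))$ form. Naturally, this corollary is contingent on Lemma~\ref{lemma: Procedure_Filter_Clusters}, so the real work lies in that lemma (namely, designing and analyzing \texttt{Filter\_Clusters} so that it runs in $O(n \log^{2} n)$ time) rather than in the short argument here.
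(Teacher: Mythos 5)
Your proof is correct and takes essentially the same route as the paper: the paper derives the corollary by substituting the bound $f(n) = O(n\log^{2}n)$ from Lemma~\ref{lemma: Procedure_Filter_Clusters} into the running time $\min\{O(kn^{2}),\,O(k^{2}n)\} + O(k \cdot f(n))$ of Theorem~\ref{theorem: Frequency_Difference}. The only difference is that you spell out the absorption argument (using $\log^{2}n = O(n)$) that the paper leaves implicit, which is a harmless elaboration.
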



\subsection{Procedure~\texttt{Filter\_Clusters}}
\label{subsection: Procedure_Filter_Clusters}

Recall that for any node~$u$ in any input tree~$T_j$, its weight~$w(u)$
is
$|K_{\Lambda(T_j[u])}(\SSS)|$.
Also, $w(C) = w(u)$, where $C = \Lambda(T_j[u])$.
We assume that all $w(u)$-values have been computed in a preprocessing step
and are available.

Let $T$ be a tree.
For every nonempty $X \subseteq V(T)$, $\mathit{lca}^{T}(X)$~denotes
the lowest common ancestor of~$X$ in~$T$.
To obtain a fast solution for \texttt{Filter\_Clusters}, we need the next
lemma.
\begin{lemma}
\label{lemma: compatible_lca_path}
Let $T$ be a tree, let $X$ be any cluster of~$\Lambda(T)$, and let
$r_X = \mathit{lca}^{T}(X)$.
For any $v \in V(T)$, it holds that $X \not\smile \Lambda(T[v])$
if and only if:
(1)~$v$ lies on a path from a child of~$r_X$ to some leaf belonging to~$X$;
and (2)~$\Lambda(T[v]) \not\subseteq X$.
\end{lemma}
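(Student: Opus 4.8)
\noindent
The plan is to unfold the definition of incompatibility and then translate the two geometric conditions into statements about the ancestor order of~$T$. Writing $C := \Lambda(T[v])$, the relation $X \not\smile C$ holds precisely when all three of $X \cap C \neq \emptyset$, \ $X \not\subseteq C$, and $C \not\subseteq X$ hold. The last of these is exactly condition~(2), so the real content is to show that, in the presence of $C \not\subseteq X$, the remaining two conditions $X \cap C \neq \emptyset$ and $X \not\subseteq C$ together are equivalent to condition~(1).

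First I would reformulate condition~(1) purely in terms of the tree order. Since the path between an ancestor and one of its descendants is the unique downward path through all intermediate nodes, $v$ lies on a path from a child of~$r_X$ to a leaf $\ell \in X$ if and only if $v$ is a proper descendant of~$r_X$ and $v$ is a (weak) ancestor of some leaf of~$X$; the latter says exactly that $T[v]$ meets~$X$, i.e.\ $X \cap C \neq \emptyset$. Hence condition~(1) is equivalent to the conjunction ``$v$ is a proper descendant of~$r_X$'' and ``$X \cap C \neq \emptyset$''. Combining this with the observation above, the whole lemma reduces to the claim that, \emph{assuming} $X \cap C \neq \emptyset$, one has $X \not\subseteq C$ if and only if $v$ is a proper descendant of~$r_X$: the shared conjuncts $X \cap C \neq \emptyset$ and $C \not\subseteq X$ may be cancelled from the two sides of the lemma, and when either shared conjunct fails both sides are false.

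To prove this reduced claim I would argue by the position of~$v$ relative to~$r_X$. Because $X \cap C \neq \emptyset$, the subtree $T[v]$ contains some leaf $\ell \in X$, so $v$ is an ancestor of~$\ell$; as $r_X = \mathit{lca}^{T}(X)$ is also an ancestor of~$\ell$, the nodes $v$ and~$r_X$ both lie on the single root-to-$\ell$ path and are therefore comparable. This yields exactly three cases: $v = r_X$, \ $v$ a proper ancestor of~$r_X$, and $v$ a proper descendant of~$r_X$. In the first two cases $\Lambda(T[r_X]) \subseteq C$, and since $X \subseteq \Lambda(T[r_X])$ by definition of the lca we get $X \subseteq C$; thus $X \not\subseteq C$ fails and $v$ is not a proper descendant of~$r_X$, so the equivalence holds with both sides false. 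In the third case $v$ sits below a single child~$c$ of~$r_X$, whence $C \subseteq \Lambda(T[c])$; were $X \subseteq C$ we would have $X \subseteq \Lambda(T[c])$, forcing $\mathit{lca}^{T}(X)$ to be a proper descendant of~$r_X$ and contradicting $r_X = \mathit{lca}^{T}(X)$. Hence $X \not\subseteq C$ holds, as does $v \prec r_X$, and the equivalence again holds.

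The main obstacle I anticipate is this third case: it is the only place where the \emph{minimality} of the lca is genuinely used, and the statement ``$X$ is not contained in the cluster of any proper descendant of~$r_X$'' must be phrased carefully. I would also confirm that the degenerate case $|X| = 1$ causes no trouble—then $r_X$ is the single leaf of~$X$ and has no proper descendants, so condition~(1) can never hold; this is consistent, since for a singleton~$X$ the hypothesis $X \cap C \neq \emptyset$ already forces $X \subseteq C$ and hence $X \smile C$.
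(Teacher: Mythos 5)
Your proof is correct, and it rests on the same underlying geometry as the paper's: both arguments classify the position of $v$ relative to $r_X$ and read off the containment/disjointness relations in each case. The paper does this directly with four cases ((i)~$v$ an ancestor of or equal to~$r_X$; (ii)~$v$ on a downward path from a child of~$r_X$ to a leaf of~$X$ with $\Lambda(T[v]) \subseteq X$; (iii)~the same but with $\Lambda(T[v]) \not\subseteq X$; (iv)~$v$ a proper descendant of~$r_X$ off all such paths) and declares incompatibility to occur exactly in case~(iii). Your reformulate-and-cancel organization buys some rigor that the paper's terse write-up skips. First, in its case~(iii) the paper records only $\Lambda(T[v]) \not\subseteq X$ and $X \cap \Lambda(T[v]) \neq \emptyset$; the third conjunct needed for incompatibility, $X \not\subseteq \Lambda(T[v])$, is left unstated, and it is precisely your lca-minimality step (if $X \subseteq \Lambda(T[v])$ with $v$ below a child~$c$ of~$r_X$, then $\mathit{lca}^{T}(X)$ would lie strictly below~$r_X$) that justifies it. Second, the paper's four cases as written do not cover nodes incomparable to~$r_X$ (such nodes are compatible, being disjoint from~$X$), whereas you derive comparability of $v$ and~$r_X$ from $X \cap C \neq \emptyset$, so your case split is provably exhaustive. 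Finally, both proofs depend on reading condition~(1) as referring to the downward path from a child of~$r_X$ to a leaf of~$X$ below that child; you make this reading explicit, which is worthwhile since under the alternative reading (a path through~$r_X$) the statement would be false for $v = r_X$.
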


\begin{proof}
Given $T$, $X$, $r_X$, and $v$ as in the lemma statement, there are four
possible cases:
(i)~$v$ is a proper ancestor of~$r_X$ or equal to~$r_X$;
(ii)~$v$ lies on a path from a child of~$r_X$ to some leaf in~$X$ and
all leaf descendants of~$v$ belong to~$X$;
(iii)~$v$ lies on a path from a child of~$r_X$ to some leaf in~$X$ and
not all leaf descendants of~$v$ belong to~$X$;
or (iv)~$v$ is a proper descendant of~$r_X$ that does not lie on any path
from a leaf in~$X$ to~$r_X$.
In case~(i), $X \subseteq \Lambda(T[v])$.
In case~(ii), $\Lambda(T[v]) \subseteq X$.
In case~(iii), $\Lambda(T[v]) \not\subseteq X$ while
$X \cap \Lambda(T[v]) \neq \emptyset$.
In case~(iv), $X \cap \Lambda(T[v]) = \emptyset$.
By the definition of compatible clusters, $X \not\smile \Lambda(T[v])$
if and only if case~(iii) occurs.
\qed
\end{proof}

Lemma~\ref{lemma: compatible_lca_path} leads to an $O(n^{2})$-time method
for \texttt{Filter\_Clusters}, which we now briefly describe.
For each node $u \in V(T_A)$ in top-down order, do the following:
Let $X = \Lambda(T_A[u])$ and find all $v \in V(T_B)$ such that
$X \not\smile \Lambda(T_B[v])$ in $O(n)$ time
by doing bottom-up traversals of~$T_B$ to first mark all ancestors of leaves
belonging to~$X$ that are proper descendants of the lowest common ancestor
of~$X$ in~$T_B$, and then unmarking all marked nodes that have no leaf
descendants outside of~$X$.
By Lemma~\ref{lemma: compatible_lca_path}, $X \not\smile \Lambda(T_B[v])$
if and only if $v$ is one of the resulting marked nodes.
If $w(u) \leq w(v)$ for any such~$v$ then do a $\mathit{delete}$ operation
on~$u$ in~$T_A$.
Clearly, the total running time is $O(n^{2})$.
(This simple method gives $f(n) = O(n^{2})$ in
Theorem~\ref{theorem: Frequency_Difference}
in Section~\ref{subsection: Algorithm_Frequency_Difference}, and hence
a total running time of $O(k n^{2})$ for
Algorithm~\texttt{Frequency\_Difference}.)
Below, we refine this idea to get an even faster solution for
\texttt{Filter\_Clusters}.

\medskip
\medskip

\noindent
\textbf{High-level description:}
We use the \emph{centroid path decomposition} technique~\cite{CF-CHPT00}
to divide the nodes of~$T_A$ into a so-called centroid path and a set of
side trees.
A \emph{centroid path} of~$T_A$ is defined as a path in~$T_A$ of the form
$\pi = \langle p_{\alpha},p_{\alpha-1},\dots,p_{1} \rangle$, where
$p_{\alpha}$ is the root of~$T_A$, the node~$p_{i-1}$ for every
$i \in \{2,\dots,\alpha\}$ is any child of~$p_{i}$ with the maximum number
of leaf descendants, and $p_{1}$~is a leaf.
Given a centroid path~$\pi$, removing $\pi$ and all its incident edges
from~$T_A$ produces a set~$\sigma(\pi)$ of disjoint trees whose root nodes
are children of nodes belonging to~$\pi$ in~$T_A$; these trees are called
the \emph{side trees} of~$\pi$.
Importantly, $|\Lambda(\tau)| \leq n/2$ for every side tree $\tau$ of~$\pi$.
Also, $\{\Lambda(\tau) : \tau \in \sigma(\pi)\}$ forms a partition
of~$L \setminus \{p_{1}\}$.
Furthermore, if $\pi$ is a centroid path of~$T_A$ then
the cluster collection~$\C(T_A)$ can be written recursively as
$\C(T_A) \,=\,
  \bigcup_{\tau \in \sigma(\pi)} \C(\tau)
  \,\cup\,
  \bigcup_{p_{i} \in \pi} \{\Lambda(T_A[p_{i}])\}
$.
Intuitively, this allows the cluster collection of~$T_A$ to be broken into
smaller sets that can be checked more easily, and then put together again at
the end.

The fast version of \texttt{Filter\_Clusters} is shown in
Fig.~\ref{figure: Algorithm_Filter_Clusters}.
It first computes
a centroid path~$\pi = \langle p_{\alpha},p_{\alpha-1},\dots,p_{1} \rangle$
of~$T_A$ and the set~$\sigma(\pi)$ of side trees of~$\pi$ in
Step~\ref{step: FC_compute_centroid_path}.
Then, in
Steps~\ref{step: FC_side_trees_init}--\ref{step: FC_side_trees_loop},
it applies itself recursively to each side tree of~$\pi$ to get rid of any
cluster in $\bigcup_{\tau \in \sigma(\pi)} \C(\tau)$ that is incompatible
with some cluster in~$T_B$ with a higher weight than itself, and
the remaining clusters are inserted into a temporary tree~$R_{s}$.
Next,
Steps~\ref{step: FC_centroid_path_init}--\ref{step: FC_centroid_path_loop}
check all clusters in $\bigcup_{p_{i} \in \pi} \{\Lambda(T_A[p_{i}])\}$ to
determine which of them are not incompatible with any cluster in~$T_B$
with a higher weight, and create a temporary tree~$R_{c}$ whose cluster
collection consists of all those clusters that pass this test.
Finally, Step~\ref{step: FC_Merge_Trees} combines the cluster collections
of~$R_{s}$ and~$R_{c}$ by applying the procedure \texttt{Merge\_Trees}.
The details of Procedure~\texttt{Filter\_Clusters} are discussed next.

\begin{figure}
\begin{center}
\fbox{
\begin{algorithm2}{\textnormal{\texttt{Filter\_Clusters}}}
\INPUT{
Two trees~$T_A$, $T_B$ with $\Lambda(T_A) = \Lambda(T_B) = L$ such that
every cluster occurring in~$T_A$ or~$T_B$ also occurs in at least one tree
in~$\SSS$.
}
\OUTPUT{
A tree~$T$ with $\Lambda(T) = L$ such that
$\C(T) = \{\Lambda(T_A[u]) \,:\,
  u \in V(T_A)$
  and
  $w(u) > w(x) \textnormal{ for every } x \in V(T_B) \textnormal{ with }
  \Lambda(T_A[u]) \not\smile \Lambda(T_B[x])\}$.
}


\STEP{
\label{step: FC_compute_centroid_path}
  Compute a centroid path
  $\pi = \langle p_{\alpha},p_{\alpha-1},\dots,p_{1} \rangle$
  of~$T_A$, where $p_{\alpha}$ is the root of~$T_A$ and $p_{1}$ is a leaf,
  and compute the set~$\sigma(\pi)$ of side trees of~$\pi$.
}

\medskip

\nullSTEP{
  /* Handle the side trees. */
}

\STEP{
\label{step: FC_side_trees_init}
  Let $R_{s}$ be a tree consisting only of a root node and a single leaf
  labeled by~$p_{1}$.
}

\STEP{
\label{step: FC_side_trees_loop}
  \textbf{for} each side tree $\tau \in \sigma(\pi)$ \textbf{do}
  \nullSTEP{\hspace*{5mm}
    $\tau' := \texttt{Filter\_Clusters}(\tau,\, T_B || \Lambda(\tau))$
  }
  \nullSTEP{\hspace*{5mm}
    Attach the root of $\tau'$ to the root of~$R_{s}$.
  }
  \textbf{endfor}
}

\medskip

\nullSTEP{
  /* Handle the centroid path. */
}

\STEP{
\label{step: FC_centroid_path_init}
  Let $R_{c}$ be a tree with $\Lambda(R_{c}) = L$ where every leaf is
  directly attached to the root.
%
  Let $BT$ be an empty binary search tree.
  For every
  $x \in V(T_B)$, initialize $\mathit{counter}(x) := 0$.
  Do a bottom-up traversal of~$T_B$ to precompute $|\Lambda(T_B[x])|$
  for every $x \in V(T_B)$.
%
  Preprocess $T_B$ for answering $\mathit{lca}$-queries.
%
  Let $\beta_{1} := 0$.
}

\STEP{
\label{step: FC_centroid_path_loop}
  \textbf{for} $i := 2$ \textbf{to} $\alpha$ \textbf{do}
  \subSTEP{\hspace*{5mm}
    Let $D$ be the set of leaves in
    $\Lambda(T_A[p_{i}]) \setminus \Lambda(T_A[p_{i-1}])$.
  }
  \subSTEP{\hspace*{5mm}
    Compute $r_{i} := \mathit{lca}^{T_B}(\{r_{i-1}\} \cup D)$.
    \hspace*{2mm}
    /* $r_{i}$ now equals $\mathit{lca}^{T_B}(\Lambda(T_A[p_{i}]).$ */
  }
  \subSTEP{\hspace*{5mm}
\label{substep: FC_insert_centroid_subpath_into_BT}
    Insert every node belonging to the path from~$r_{i}$ to~$r_{i-1}$,
    except~$r_{i}$, into~$BT$.
  }
  \subSTEP{\hspace*{5mm}
\label{substep: FC_insert_rest_into_BT}
    \textbf{for} each $x \in D$ \textbf{do}
      \nullSTEP{\hspace*{10mm}
        Insert $x$ into $BT$.
      }
      \nullSTEP{\hspace*{10mm}
        \textbf{while} ($\mathit{parent}(x)$ is not in $BT$ and
                        $\mathit{parent}(x) \neq r_i$)
        \textbf{do}
        \nullSTEP{\hspace*{15mm}
          $x := \mathit{parent(x)}$;
          insert $x$ into $BT$.
        }
    \nullSTEP{\hspace*{5mm}
      \textbf{endfor}
    }
    }
  }
  \subSTEP{\hspace*{5mm}
\label{substep: FC_update_BT}
    \textbf{for} each $x \in D$ \textbf{do}
      \nullSTEP{\hspace*{10mm}
        $\mathit{counter}(x) := \mathit{counter}(x) + 1$
      }
      \nullSTEP{\hspace*{10mm}
        \textbf{while} ($\mathit{counter}(x) = |\Lambda(T_B[x])|$)
        \textbf{do}
        \nullSTEP{\hspace*{15mm}
          $\mathit{counter}(\mathit{parent}(x)) :=
            \mathit{counter}(\mathit{parent}(x)) + |\Lambda(T_B[x])|$
        }
        \nullSTEP{\hspace*{15mm}
          Remove $x$ from $BT$;
          $x := \mathit{parent}(x)$
        }
      \nullSTEP{\hspace*{10mm}
        \textbf{endwhile}
      }
    \nullSTEP{\hspace*{5mm}
      \textbf{endfor}
    }
    }
  }
  \subSTEP{\hspace*{5mm}
\label{substep: FC_retrieve_maximum_weight}
    Let $M := $ maximum weight of a node in~$BT$;
    \textbf{if} $BT$ is empty \textbf{then} $M := 0$.
  }
  \subSTEP{\hspace*{5mm}
\label{substep: FC_special_nodes}
    Compute $\beta_{i}$, and \textbf{if} $\beta_{i} > M$ \textbf{then}
    let $M := \beta_{i}$.
  }
  \subSTEP{\hspace*{5mm}
\label{substep: FC_centroid_path_insert}
    \textbf{if} ($w(\Lambda(T_A[p_{i}])) > M$) \textbf{then}
    put $\Lambda(T_A[p_{i}])$ in~$R_{c}$ by an $\mathit{insert}$ operation.
  }
  \textbf{endfor}
}

\medskip

\nullSTEP{
  /* Combine the surviving clusters. */
}

\STEP{
\label{step: FC_Merge_Trees}
  $T := \texttt{Merge\_Trees}(R_{s}, R_{c})$
}

\STEP{
  \RETURN{$T$}
}
\end{algorithm2}
}
\caption{The procedure~\texttt{Filter\_Clusters}.}
\label{figure: Algorithm_Filter_Clusters}
\end{center}
\end{figure}

\medskip

\noindent
\textbf{Steps~\ref{step: FC_side_trees_init}--\ref{step: FC_side_trees_loop}
(handling the side trees):}
For every nonempty $C \subseteq \Lambda(T)$, define $T | C$
(``the subtree of~$T$ induced by~$C$''; see, e.g.,~\cite{CF-CHPT00})
as the tree~$T'$ with leaf label set~$C$ and
internal node set~$\{\mathit{lca}^{T}(\{a,b\}) : a,b \in C\}$
which preserves the ancestor relations from~$T$, i.e., which satisfies
$\mathit{lca}^{T}(C') = \mathit{lca}^{T'}(C')$ for all nonempty
$C' \subseteq C$.
Now, let $\sigma(\pi)$ be the set of side trees of the centroid path~$\pi$
computed in Step~\ref{step: FC_compute_centroid_path}.
For each $\tau \in \sigma(\pi)$, define a weighted tree $T_B || \Lambda(\tau)$
as follows.
First, construct $T_B | \Lambda(\tau)$ and let the weight of each node in
this tree equal its weight in~$T_B$.
Next, for each edge $(u,v)$ in~$T_B | \Lambda(\tau)$, let $P$ be the path
in~$T_B$ between~$u$ and~$v$, excluding~$u$ and~$v$;
if $P$ is not empty then create a new node~$z$ in $T_B | \Lambda(\tau)$,
replace the edge~$(u,v)$ by the two edges $(u,z)$ and~$(z,v)$, and set
the weight of~$z$ to the maximum weight of all nodes belonging to~$P$.
Each such~$z$ is called a \emph{special node} and has exactly one child.
See Fig.~\ref{figure: create_restricted_weighted_tree} for an example.
We extend the concept of ``compatible'' to special nodes as follows:
if $C \subseteq L$ and $z$ is a special node in~$T_B || \Lambda(\tau)$ then
$C \smile z$ if and only if
$C$~and~$\Lambda((T_B || \Lambda(\tau))[z])$ are disjoint or
$(T_B || \Lambda(\tau))[z]$ has no proper descendant that is a special node.
The obtained tree $T_B || \Lambda(\tau)$ satisfies
$\Lambda(\tau) = \Lambda(T_B || \Lambda(\tau))$ and has the property that
for every cluster~$C$ in~$\C(\tau)$,
$\max \{w(X) : X \in \C(T_B)
  \textnormal{ and } C \not\smile X\}$
is equal to
$\max \{w(X) : X \in \C(T_B || \Lambda(\tau))
  \textnormal{ and } C \not\smile X\}$.

\begin{figure}[t!]
\begin{center}
  \includegraphics[scale=0.35]{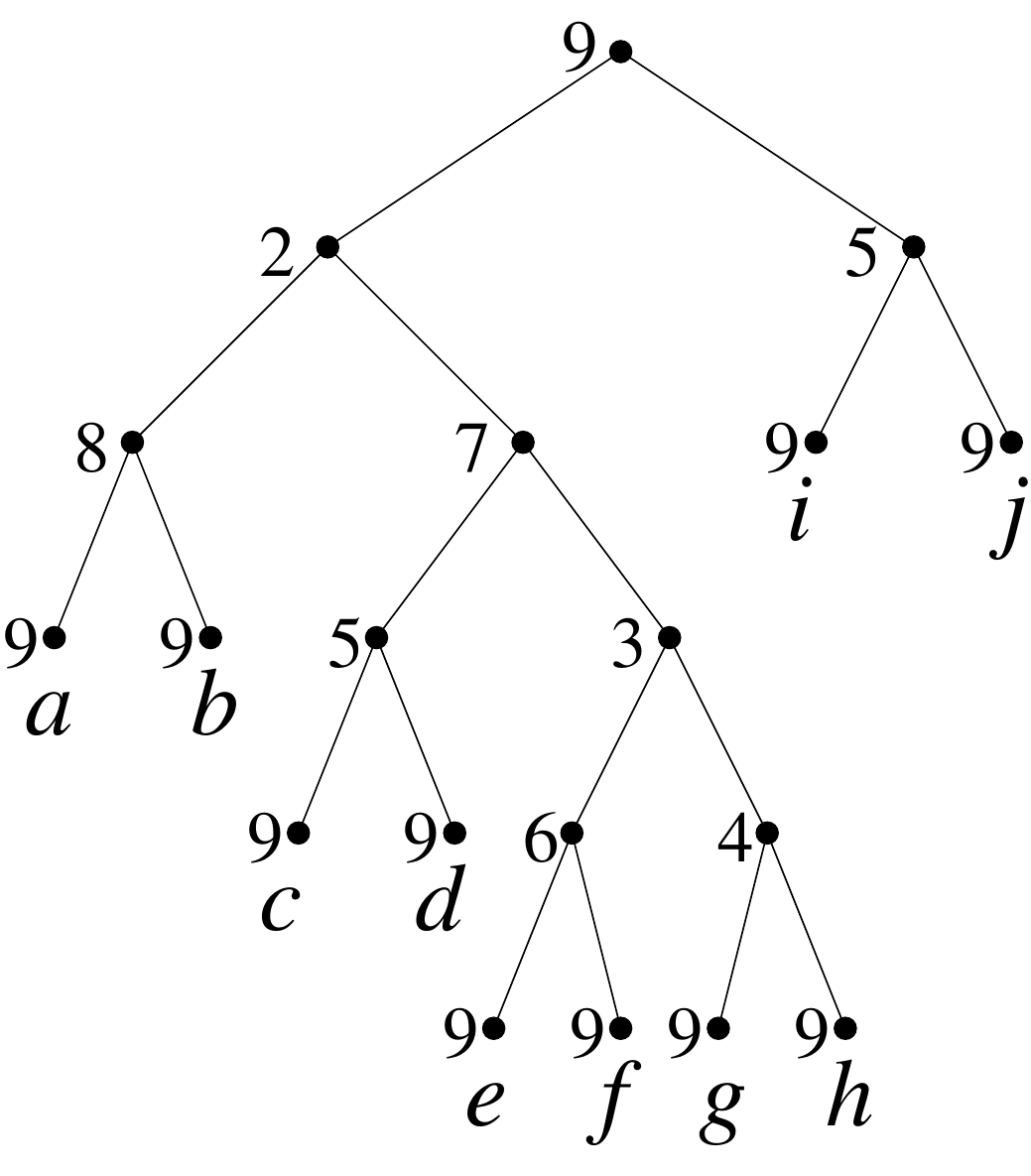}
\hspace*{15mm}
  \includegraphics[scale=0.35]{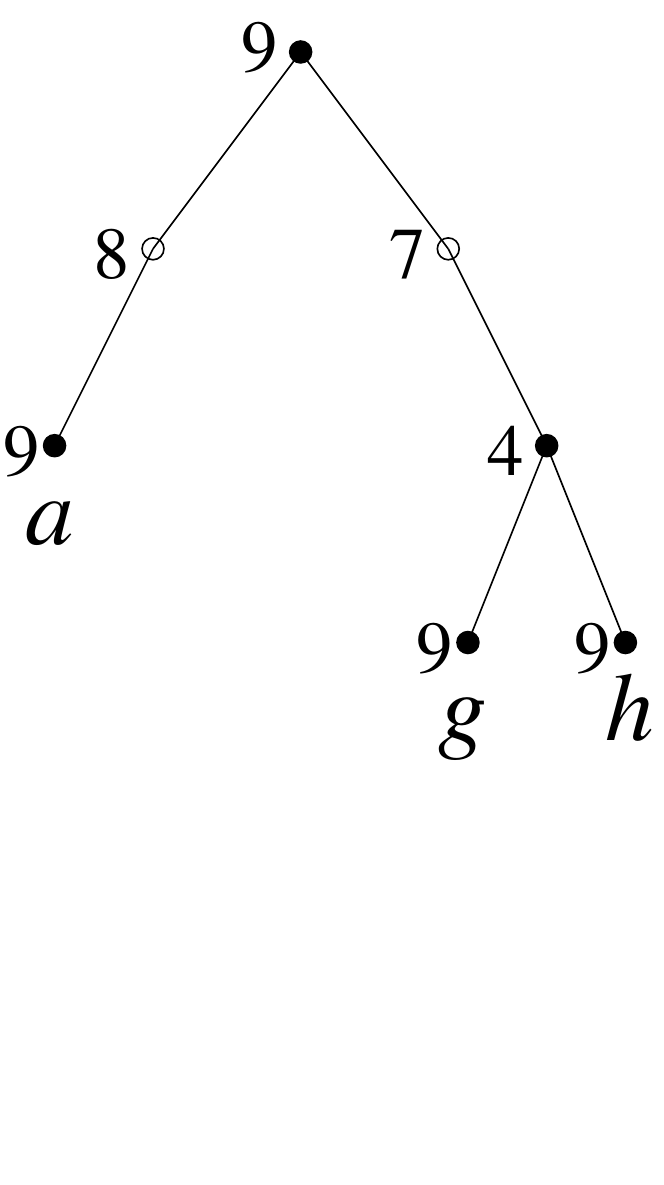}
\caption{Let $T_B$ be the tree (with node weights) on the left.
The tree~$T_B || \{a,g,h\}$ is shown on the right.
}
\label{figure: create_restricted_weighted_tree}
\end{center}
\end{figure}

After constructing $T_B || \Lambda(\tau)$, \texttt{Filter\_Clusters}
is
applied to $(\tau,\, T_B || \Lambda(\tau))$
recursively to remove all bad clusters from~$\tau$.
For each $\tau \in \sigma(\pi)$, the resulting tree is denoted by~$\tau'$.
All the clusters of~$\tau'$ are inserted into the tree~$R_{s}$ by directly
attaching $\tau'$ to the root of~$R_{s}$.
Since $\{\Lambda(\tau') : \tau \in \sigma(\pi)\}$ forms a partition
of~$L \setminus \{p_{1}\}$, every leaf label in~$L$ appears exactly once
in~$R_{s}$ and we have
$\C(R_{s}) =
  \{\Lambda(T_A[u]) \,:\,
  u \in V(\tau) \textnormal{ for some } \tau \in \sigma(\pi)$
  and
  $w(u) > w(x) \textnormal{ for every } x \in V(T_B) \textnormal{ with }
  \Lambda(T_A[u]) \not\smile \Lambda(T_B[x])\}
  \,\cup\, \{L\}$
after Step~\ref{step: FC_side_trees_loop} is finished.

\medskip

\noindent
\textbf{Steps~\ref{step: FC_centroid_path_init}--\ref{step: FC_centroid_path_loop}
(handling the centroid path):}
The clusters $\bigcup_{p_{i} \in \pi} \{\Lambda(T_A[p_{i}])\}$ on
the centroid path are nested
because $p_{i}$ is the parent of~$p_{i-1}$, so
$\Lambda(T_A[p_{i-1}]) \subseteq \Lambda(T_A[p_{i}])$ for every
$i \in \{2,3,\dots,\alpha\}$.
The main loop (Step~\ref{step: FC_centroid_path_loop}) checks each of these
clusters in order of increasing cardinality.

The algorithm maintains a binary search tree~$BT$ that, right after
Step~\ref{substep: FC_update_BT} in any iteration~$i$ of the main loop is
complete, contains every node~$x$ from~$T_B$ with
$\Lambda(T_A[p_{i}]) \not\smile \Lambda(T_B[x])$.
Whenever a node~$x$ is inserted into~$BT$, its key is set to
the weight~$w(T_B[x])$.
Using $BT$, Step~\ref{substep: FC_retrieve_maximum_weight} retrieves
the weight~$M$ of the heaviest cluster in~$T_B$ that is incompatible
with~$\Lambda(T_A[p_{i}])$ (if any).
Then, Step~\ref{substep: FC_special_nodes} computes a value~$\beta_{i}$,
defined as the maximum weight of all special nodes in~$T_B$ (if any) that are
incompatible with the current~$T_A[p_{i}]$; if $\beta_{i} > M$ then $M$ is
set to~$\beta_{i}$.
Step~\ref{substep: FC_centroid_path_insert} saves $\Lambda(T_A[p_{i}])$ by
inserting it into the tree~$R_{c}$ if its weight is strictly greater
than~$M$.
After Step~\ref{step: FC_centroid_path_loop} is done,
$\C(R_{c}) =
  \{\Lambda(T_A[u]) \,:\, u \in \pi \textnormal{ and }
  w(u) > w(x) \textnormal{ for every } x \in V(T_B) \textnormal{ with }
  \Lambda(T_A[u]) \not\smile \Lambda(T_B[x])\}$.

In order to update~$BT$ correctly while moving upwards along~$\pi$ in
Step~\ref{step: FC_centroid_path_loop}, the algorithm relies on
Lemma~\ref{lemma: compatible_lca_path}.
In each iteration $i \in \{2,3,\dots,\alpha\}$ of
Step~\ref{step: FC_centroid_path_loop}, $r_{i}$~is the lowest common ancestor
in~$T_B$ of~$\Lambda(T_A[p_{i}])$.
By Lemma~\ref{lemma: compatible_lca_path}, the clusters in~$T_B$ that are
incompatible with $\Lambda(T_A[p_{i}])$ are of the form~$T_B[v]$ where:
(1)~$v$ lies on a path in~$T_B$ from a child of~$r_{i}$ to a leaf
in~$\Lambda(T_A[p_{i}])$;
and (2)~$\Lambda(T[v]) \not\subseteq \Lambda(T_A[p_{i}])$.
Accordingly, $BT$~is updated in
Steps~\ref{substep: FC_insert_centroid_subpath_into_BT}--\ref{substep: FC_update_BT}
as follows.
Condition~(1) is taken care of by first inserting all nodes from~$T_B$
between~$r_{i-1}$ and~$r_{i}$ except~$r_i$ into~$BT$ in
Step~\ref{substep: FC_insert_centroid_subpath_into_BT}
and then inserting all leaf descendants of~$p_{i}$ that are not descendants
of~$p_{i-1}$, along with any of their ancestors in~$T_B$ that were not already
in~$BT$, into~$BT$ in Step~\ref{substep: FC_insert_rest_into_BT}.
Finally, Step~\ref{substep: FC_update_BT} enforces condition~(2) by using
counters to locate and remove all nodes from~$BT$ (if any) whose clusters are
proper subsets of~$\Lambda(T_A[p_{i}])$.
To do this, $\mathit{counter}(x)$ for every node~$x$ in~$T_B$ is updated so
that it stores the number of leaves
in~$\Lambda(T_B[x]) \cap \Lambda(T_A[p_{i}])$ for the current~$i$,
and if $\mathit{counter}(x)$ reaches the value $|\Lambda(T_B[x])|$ then $x$
is removed from~$BT$.

To compute~$\beta_{i}$ in Step~\ref{substep: FC_special_nodes}, take
the maximum of:
(i)~$\beta_{i-1}$;
(ii)~the weights of all special nodes on the path between~$r_{i}$
and~$r_{i-1}$ in~$T_B$; and
(iii)~the weights of all special nodes that belong to a path between~$r_{i}$
and a leaf in~$D$.

\comment{ 
\medskip
\medskip

\noindent
\textbf{Time complexity:}
To analyze the time complexity of \texttt{Filter\_Clusters}, we first prove
a lemma.
\begin{lemma}
\label{lemma: maximum_weight_along_path}
Let $T$ be a tree with $n$~leaves in which every node~$v$ has
a weight~$w(v)$.
After $O(n \log n)$ time preprocessing, the maximum weight of all nodes
on the path from any specified node in~$T$ to any specified descendant node
can be recalled in $O(1)$ time.
\end{lemma}

\begin{proof}
Decompose~$T$ into a centroid path and a set of side trees as above.
Then, recursively decompose each side tree in the same way until $V(T)$
has been partitioned into a set of disjoint centroid paths.
This takes $O(n)$ time according to Section~2 in~\cite{CF-CHPT00}.
Next, build two sets of data structures.
First, for every centroid path $P_c = (v_1,v_2,\dots,v_t)$, let $P_c[1..t]$
be an array of integers with $P_c[i] = w(v_i)$ for every
$i \in \{1,2,\dots,t\}$.
Store $P_c$ in the RMQ (range minimum/maximum query) data structure
of~\cite{BF-C00} which, after linear-time preprocessing, can return
the index of a
maximum element in the subarray~$P_c[i..j]$
for any $1 \leq i \leq j \leq t$ in $O(1)$ time.
Second, for every $x \in \Lambda(T)$, denote the list of all centroid
subpaths contained in the path from the root of~$T$ to leaf~$x$ by
$Q_1,Q_2,\dots,Q_f$, where each $Q_i$ is a subpath of some centroid path
of~$T$.
Let $W_x[1..f]$ be an array such that $W_x[i] = \max_{v \in Q_i} w(v)$
for every $i \in \{1,2,\dots,f\}$.
Each $W_x[i]$-entry is obtained from the $P_c$-RMQ data structures in
$O(1)$ time, so we construct another RMQ data structure to store~$W_x$ in
$O(f)$ time, and since $f = O(\log n)$, this takes $O(n \log n)$ time in
total for all $x \in \Lambda(T)$.

Then, to find the maximum weight along the path from any node~$u$ to
a descendant~$v$, let $Q_1,Q_2,\dots,Q_f$ be the concatenation of subpaths
of centroid paths of~$T$ that lead from~$u$ to~$v$.
The values~$\max_{v \in Q_1} w(v)$ and~$\max_{v \in Q_f} w(v)$ are found
in $O(1)$ time by querying the $P_c$-RMQ data structures for
the centroid paths that contain~$Q_1$ and~$Q_f$, respectively,
and $\max_{v \in Q_2 \cup \dots \cup Q_{f-1}} w(v)$ is found in $O(1)$ time
by querying the RMQ data structure for~$W_x$ for any $x \in \Lambda(T[v])$.
\qed
\end{proof}

\begin{lemma}
\label{lemma: Procedure_Filter_Clusters}
Procedure~\textnormal{\texttt{Filter\_Clusters}} runs in
$O(n \log^{2} n)$ time.
\end{lemma}

\begin{proof}
Step~\ref{step: FC_compute_centroid_path} is straightforward and takes
$O(n)$ time~\cite{CF-CHPT00}.
As for Steps~\ref{step: FC_side_trees_init}--\ref{step: FC_side_trees_loop},
according to Section~8 in~\cite{CF-CHPT00}, $T_B$~can be preprocessed in
$O(n)$ time so that $T_B | \Lambda(\tau)$ for each side tree~$\tau$ of~$\pi$
can be constructed in $O(|\Lambda(\tau)|)$ time.
After that, $T_B || \Lambda(\tau)$~is obtained from $T_B | \Lambda(\tau)$
in $O(|\Lambda(\tau)| \cdot \log |\Lambda(\tau)|)$ time by
Lemma~\ref{lemma: maximum_weight_along_path}.
In addition, Step~\ref{step: FC_side_trees_loop} makes a recursive call for
each~$\tau$.
Steps~\ref{step: FC_centroid_path_init}--\ref{step: FC_centroid_path_loop}
take $O(n \log n)$ time in total because every operation involving $BT$
takes $O(\log n)$ time and because $T_B$ can be preprocessed in $O(n)$ time
so that any $\mathit{lca}^{T}(\{a,b\})$-query with $a,b \in L$ can be
answered in $O(1)$ time~\cite{BF-C00,HT84}.
Also, all the values of~$\beta_{i}$ can be computed in $O(n \log n)$ time
in total as follows.
Let $T_B'$ be a copy of~$T_B$ where the weights of all non-special nodes are
changed to~$0$ and apply Lemma~\ref{lemma: maximum_weight_along_path}
to~$T_B'$.
This takes $O(n \log n)$ time.
Then, in every iteration~$i$, make one query to find the maximum weight of
a special node on the path from~$r_{i}$ to~$r_{i-1}$ in~$T_B'$ and $|D|$
queries to find the maximum weight of a special node on a path from~$r_i$
to a leaf in~$D$.
Each query takes $O(1)$ time by Lemma~\ref{lemma: maximum_weight_along_path},
so the total time for all queries is $O(n)$.
Finally, Step~\ref{step: FC_Merge_Trees} takes $O(n)$ time according to
Lemma~\ref{lemma: Procedure_Merge_Trees}.

For any side tree~$\tau$ of~$\pi$, let $g(\tau)$ represent
the running time of \texttt{Filter\_Clusters}$(\tau,\, T_B || \Lambda(\tau))$.
Then the total running time can be written as
$O(n \log n) + \sum_{\tau \in \sigma(\pi)} g(\tau)$.
Every side tree~$\tau$ satisfies $|\Lambda(\tau)| \leq n/2$, so
there are $O(\log n)$ recursion levels and the total running time is
$O(n \log^{2} n)$.
\qed
\end{proof}
} 

\begin{lemma}
\label{lemma: Procedure_Filter_Clusters}
Procedure~\textnormal{\texttt{Filter\_Clusters}} runs in
$O(n \log^{2} n)$ time.
\end{lemma}

\begin{proof}
(Omitted from the conference version due to space constraints.)
\qed
\end{proof}


\section{Implementations}
\label{section: Implementations}

As noted in Section~\ref{subsection: Previous_work}, there does not seem to
be any publicly available implementation for the majority rule~(+) consensus
tree.
To fill this void, we implemented Algorithm \texttt{Maj\_Rule\_Plus} from
Section~\ref{section: Majority_rule_plus_consensus_tree} in~C++ and included
it in the source code of the FACT (Fast Algorithms for Consensus Trees)
package~\cite{JSS_13a} at:

\verb=http://compbio.ddns.comp.nus.edu.sg/~consensus.tree/=

\noindent
To test the implementation, we repeatedly applied it to 10~random sets of
trees for various specified values of~$(k,n)$, generated with the method
described in Section~6.2 of~\cite{JSS_13a}.
The following worst-case running times (in seconds) were obtained using
Ubuntu Nutty Narwhal, a 64-bit operating system with 8.00~GB RAM, and
a 2.20~GHz CPU:

\smallskip

\noindent
\scalebox{0.715}{
\begin{tabular}{||c||c|c|c|c|c|c|c|c|c||}
\hline
\hline
$(k,n)$ &
$(100,500)$ &
$(100,1000)$ &
$(100,2000)$ &
$(100,5000)$ &
$(500,100)$ &
$(1000,100)$ &
$(2000,100)$ &
$(5000,100)$ &
$(1000,2000)$
\\ \hline\hline
Time &
0.63 &
1.51 &
2.99 &
6.78 &
0.65 &
1.29 &
2.72 &
6.66 &
27.29
\\ \hline\hline
\end{tabular}
}

\medskip

The situation for the frequency difference consensus tree is less critical
as there already exist implementations, e.g., in the software package
TNT~\cite{software:GFN08}.
Nevertheless, it could be useful to implement our algorithm
\texttt{Frequency\_Difference} from
Section~\ref{section: Frequency_difference_consensus_tree} in the future and
compare its practical performance to TNT.
Before doing that, one should try to simplify the procedure
\texttt{Filter\_Clusters}.


\bibliography{Bibl_frequency_difference_c_t1_WABI_2013}

\begin{thebibliography}{10}

\bibitem{A72}
E.~N. {Adams III}.
\newblock Consensus techniques and the comparison of taxonomic trees.
\newblock {\em Systematic Zoology}, 21(4):390--397, 1972.

\bibitem{ACS03}
N.~Amenta, F.~Clarke, and K.~St. John.
\newblock A linear-time majority tree algorithm.
\newblock In {\em Proceedings of WABI~2003}, volume 2812 of {\em LNCS}, pages
  216--227. Springer-Verlag, 2003.

\bibitem{BM86}
J.-P. {Barth\'{e}lemy} and F.~R. McMorris.
\newblock The median procedure for {n-trees}.
\newblock {\em Journal of Classification}, 3(2):329--334, 1986.

\bibitem{B90}
K.~Bremer.
\newblock Combinable component consensus.
\newblock {\em Cladistics}, 6(4):369--372, 1990.

\bibitem{chapter:Bryant03}
D.~Bryant.
\newblock A classification of consensus methods for phylogenetics.
\newblock In M.~F. Janowitz, F.-J. Lapointe, F.~R. McMorris, B.~Mirkin, and
  F.~S. Roberts, editors, {\em Bioconsensus}, volume~61 of {\em DIMACS Series
  in Discrete Mathematics and Theoretical Computer Science}, pages 163--184.
  American Mathematical Society, 2003.

\bibitem{CF-CHPT00}
R.~Cole, M.~Farach-Colton, R.~Hariharan, T.~Przytycka, and M.~Thorup.
\newblock An {$O(n \log n)$} algorithm for the maximum agreement subtree
  problem for binary trees.
\newblock {\em SIAM Journal on Computing}, 30(5):1385--1404, 2000.

\bibitem{CW07}
J.~A. Cotton and M.~Wilkinson.
\newblock Majority-rule supertrees.
\newblock {\em Systematic Biology}, 56(3):445--452, 2007.

\bibitem{CJS_12}
Y.~Cui, J.~Jansson, and W.-K. Sung.
\newblock Polynomial-time algorithms for building a consensus {MUL}-tree.
\newblock {\em Journal of Computational Biology}, 19(9):1073--1088, 2012.

\bibitem{D85}
W.~H.~E. Day.
\newblock Optimal algorithms for comparing trees with labeled leaves.
\newblock {\em Journal of Classification}, 2(1):7--28, 1985.

\bibitem{DDBR09}
J.~H. Degnan, M.~DeGiorgio, D.~Bryant, and N.~A. Rosenberg.
\newblock Properties of consensus methods for inferring species trees from gene
  trees.
\newblock {\em Systematic Biology}, 58(1):35--54, 2009.

\bibitem{DF-BMP_10}
J.~Dong, D.~Fern\'{a}ndez-Baca, F.~R. McMorris, and R.~C. Powers.
\newblock Majority-rule~(+) consensus trees.
\newblock {\em Mathematical Biosciences}, 228(1):10--15, 2010.

\bibitem{book:Fel04}
J.~Felsenstein.
\newblock {\em Inferring Phylogenies}.
\newblock Sinauer Associates, Inc., Sunderland, Massachusetts, 2004.

\bibitem{software:Fel05}
J.~Felsenstein.
\newblock {PHYLIP}, version~3.6.
\newblock Software package, Department of Genome Sciences, University of
  Washington, Seattle, U.S.A., 2005.

\bibitem{GFKORS03}
P.~A. Goloboff, J.~S. Farris, M.~{K\"{a}llersj\"{o}}, B.~Oxelman, M.~J.
  {Ram\'{i}rez}, and C.~A. Szumik.
\newblock Improvements to resampling measures of group support.
\newblock {\em Cladistics}, 19(4):324--332, 2003.

\bibitem{software:GFN08}
P.~A. Goloboff, J.~S. Farris, and K.~C. Nixon.
\newblock {TNT,} a free program for phylogenetic analysis.
\newblock {\em Cladistics}, 24(5):774--786, 2008.

\bibitem{HSL08}
M.~T. Holder, J.~Sukumaran, and P.~O. Lewis.
\newblock A justification for reporting the majority-rule consensus tree in
  {B}ayesian phylogenetics.
\newblock {\em Systematic Biology}, 57(5):814--821, 2008.

\bibitem{JSS_13a}
J.~Jansson, C.~Shen, and W.-K. Sung.
\newblock Improved algorithms for constructing consensus trees.
\newblock In {\em Proceedings of SODA~2013}, pages 1800--1813. SIAM, 2013.

\bibitem{JSS_13b}
J.~Jansson, C.~Shen, and W.-K. Sung.
\newblock An optimal algorithm for building the majority rule consensus tree.
\newblock In {\em Proceedings of RECOMB~2013}, volume 7821 of {\em LNCS}, pages
  88--99. Springer-Verlag, 2013.

\bibitem{JS_13}
J.~Jansson and W.-K. Sung.
\newblock Constructing the {R*} consensus tree of two trees in subcubic time.
\newblock {\em Algorithmica}, 66(2):329--345, 2013.

\bibitem{LSHPOM09}
M.~Lott, A.~Spillner, K.~T. Huber, A.~Petri, B.~Oxelman, and V.~Moulton.
\newblock Inferring polyploid phylogenies from multiply-labeled gene trees.
\newblock {\em BMC Evolutionary Biology}, 9:216, 2009.

\bibitem{MM81}
T.~Margush and F.~R. McMorris.
\newblock Consensus {$n$-Trees}.
\newblock {\em Bulletin of Mathematical Biology}, 43(2):239--244, 1981.

\bibitem{MP08}
F.~R. McMorris and R.~C. Powers.
\newblock A characterization of majority rule for hierarchies.
\newblock {\em Journal of Classification}, 25(2):153--158, 2008.

\bibitem{software:Pag93}
R.~Page.
\newblock {COMPONENT}, version~2.0.
\newblock Software package, University of Glasgow, U.K., 1993.

\bibitem{software:RH03}
F.~Ronquist and J.~P. Huelsenbeck.
\newblock {MrBayes 3}: Bayesian phylogenetic inference under mixed models.
\newblock {\em Bioinformatics}, 19(12):1572--1574, 2003.

\bibitem{SR81}
R.~R. Sokal and F.~J. Rohlf.
\newblock Taxonomic congruence in the {L}eptopodomorpha re-examined.
\newblock {\em Systematic Zoology}, 30(3):309--325, 1981.

\bibitem{software:SH_10}
J.~Sukumaran and M.~T. Holder.
\newblock {DendroPy}: a {Python} library for phylogenetic computing.
\newblock {\em Bioinformatics}, 26(12):1569--1571, 2010.

\bibitem{book:Sung_10}
W.-K. Sung.
\newblock {\em Algorithms in Bioinformatics: A Practical Introduction}.
\newblock Chapman \& Hall/CRC, 2010.

\bibitem{software:Swo03}
D.~L. Swofford.
\newblock {PAUP*}, version~4.0.
\newblock Software package, Sinauer Associates, Inc., Sunderland,
  Massachusetts, 2003.

\end{thebibliography}
\bibliographystyle{plain}


\end{document}